\documentclass[11pt, a4paper]{article}
\usepackage{amsmath, amsthm, amssymb, url}
\usepackage[margin=1.2in]{geometry}
\usepackage[english]{babel}
\usepackage{multirow}
\usepackage{authblk}
\usepackage{tikz-cd} 
\usepackage{color}
\usepackage{hyperref}
\usepackage[linesnumbered, ruled, vlined]{algorithm2e}
%

\newcommand{\Fix}{\mathrm{Fix}}

\newcommand{\MN}{\mathrm{MN}}

\theoremstyle{plain}

\newtheorem{corollary}{Corollary}
\newtheorem{lemma}{Lemma}
\newtheorem{proposition}{Proposition}

\newtheorem{theorem}{Theorem}

\theoremstyle{definition}
\newtheorem{definition}{Definition}
\newtheorem{example}{Example}
\newtheorem{remark}{Remark}

\begin{document}

\title{Connections between the minimal neighborhood and the activity value of cellular automata}
\author[1]{Alonso Castillo-Ramirez\footnote{Email: alonso.castillor@academicos.udg.mx}}
\author[2]{Eduardo Veliz-Quintero \footnote{Email: eduardo.veliz9236@alumnos.udg.mx}}
\affil[1]{Centro Universitario de Ciencias Exactas e Ingenier\'ias, Universidad de Guadalajara, M\'exico.}
\affil[2]{Centro Universitario de los Valles, Universidad de Guadalajara, M\'exico.}

\maketitle

\begin{abstract}
For a group $G$ and a finite set $A$, a cellular automaton is a transformation of the configuration space $A^G$ defined via a finite neighborhood and a local map. Although neighborhoods are not unique, every CA admits a unique \emph{minimal neighborhood}, which consists on all the essential cells in $G$ that affect the behavior of the local map. An \emph{active transition} of a cellular automaton is a pattern that produces a change on the current state of a cell when the local map is applied. In this paper, we study the links between the minimal neighborhood and the number of active transitions, known as the \emph{activity value}, of cellular automata. Our main results state that the activity value usually imposes several restrictions on the size of the minimal neighborhood of local maps.   \\

\textbf{Keywords:} Cellular automata; minimal neighborhood; active transition; activity value.  
\end{abstract}

\section{Introduction}

Cellular automata (CA) are mathematical models used to simulate complex systems over discrete spaces that have the key feature of being defined by a fixed local rule that is applied homogeneously and in parallel in the whole space. Traditionally, CA are defined only over a \emph{universe} that is a $d$-dimensional grid $\mathbb{Z}^d$ (e.g., see \cite{Kari}), but recent studies have considered the more general setting of a universe that is an arbitrary group $G$ (e.g., see \cite{CAandG}).         

Before defining CA more formally, we shall introduce the \emph{configuration space} $A^G$, which consists of the set of all functions, or \emph{configurations}, $x : G \to A$, where $A$ is a finite set known as the \emph{alphabet}. When $S$ is a finite subset of $G$, a function $p : S \to A$ is called a \emph{pattern} (or a \emph{block}) over $S$, and the set of all patterns over $S$ is denoted by $A^S$. 

\begin{definition}\label{def-ca}
A \emph{cellular automaton} is a transformation $\tau : A^G \to A^G$ such that there exists a finite subset $S \subseteq G$, called a \emph{neighborhood} of $\tau$, and a \emph{local map} $\mu : A^S \to A$ satisfying
\[ \tau(x)(g) = \mu( (g \cdot x) \vert_S), \quad \forall x \in A^G, g \in G,\]
where $(g \cdot x) \in A^G$ is the \emph{shift} of $x$ by $g$ defined by
\[ (g \cdot x)(h) := x(hg), \quad \forall h \in G.  \]
\end{definition}

Intuitively, applying $\tau$ to a configuration $x \in A^G$ is the same as applying the local map $\mu : A^S \to A$ homogeneously and in parallel using the shift action of $G$ on $A^G$. Local maps $\mu : A^S \to A$ are also known in the literature as \emph{block maps}. A well-known result is the \emph{Curtis-Hedlund-Lyndon theorem} which states that a function $\tau : A^G \to A^G$ is a cellular automaton if and only if $\tau$ is \emph{$G$-equivariant} (i.e. $\tau(g \cdot x) = g \cdot \tau(x)$, for all $g \in G$, $x \in A^G$) and continuous in the \emph{prodiscrete topology} of $A^G$ (i.e. the product topology of the discrete topology of $A$).  

Cellular automata do not have a unique neighborhood. With the above notation, for any finite superset $S^{\prime} \supseteq S$, we may define $\mu^\prime : A^{S^{\prime}} \to A$ by $\mu^\prime(z) := \mu(z \vert_S)$, for all $z \in A^{S^{\prime}}$, and it follows that $\mu^\prime$ is also a local map that defines $\tau$. Hence, any finite superset of a neighborhood of $\tau$ is also a neighborhood of $\tau$. However, cellular automata do have a unique \emph{minimal neighborhood} which consists of all the \emph{essential} elements of $G$ required to define a local defining map for $\tau$ (see Propositions \ref{prop-mn} and \ref{le-mms}). Minimal neighborhoods do not behave well with respect to the composition of cellular automata (see \cite[Ex. 1.27]{ExCA}), neither with respect to the inverse of an invertible cellular automaton. 

If the identity $e$ of the group $G$ is in a neighborhood $S \subseteq G$, a pattern $p \in A^S$ is called an \emph{active transition} of the local map $\mu : A^S \to A$ if
\[ \mu(p) \neq p(e).  \]
This notion has been recently and independently considered by several authors. In \cite{Pedro1, Pedro2, Fates1, Fates2}, active transitions of local maps are used to study the behavior of asynchronous one-dimensional cellular automata. The term \emph{activity value} $\alpha(\mu) \in \mathbb{N}$ of $\mu : A^S \to A$ is introduced in \cite{Concha} as the number of active transitions of $\mu$ in order to define a notion of \emph{sub-rule} among elementary cellular automata. 

Computational experiments showed that there is a clear connection between the possible sizes of the minimal neighborhood and the activity value of a local function. Figure \ref{fig1} shows this connection for local functions of the form $A^S \to A$ with $\vert A \vert = 2$, $\vert S \vert = 5$ and $S \subseteq \mathbb{Z}$. In this figure, the horizontal axis shows the possible activity values (the maximum activity is $2^5$), the vertical axis shows the possible sizes of the minimal neighborhood, and the existence of a vertical bar in the graphs shows the existence of a local function with the corresponding activity value and size of minimal neighborhood. Figure \ref{fig2}, is analogous to Figure \ref{fig1}, but for the case $\vert A \vert = 3$, $\vert S \vert = 3$ and $S \subseteq \mathbb{Z}$.

\begin{figure}
\centering
\includegraphics[scale=.7]{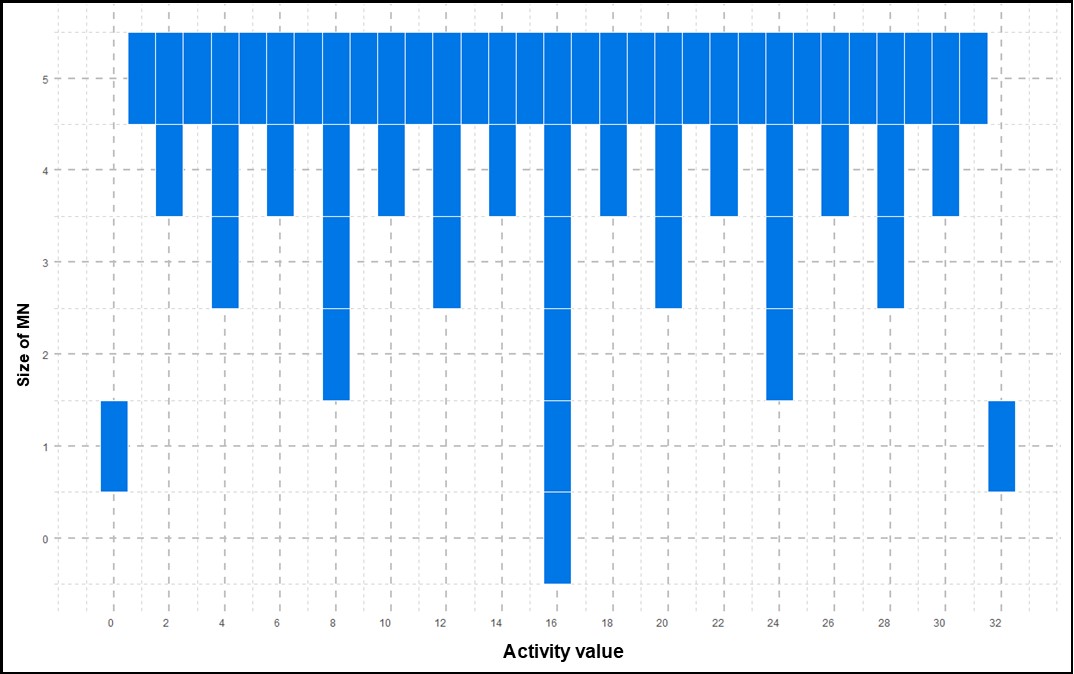}
\caption{Size of MN vs. Activity value for local functions with $\vert A \vert = 2$ and $\vert S \vert = 5$.}
\label{fig1}
\end{figure}

\begin{figure}
\centering
\includegraphics[scale=.7]{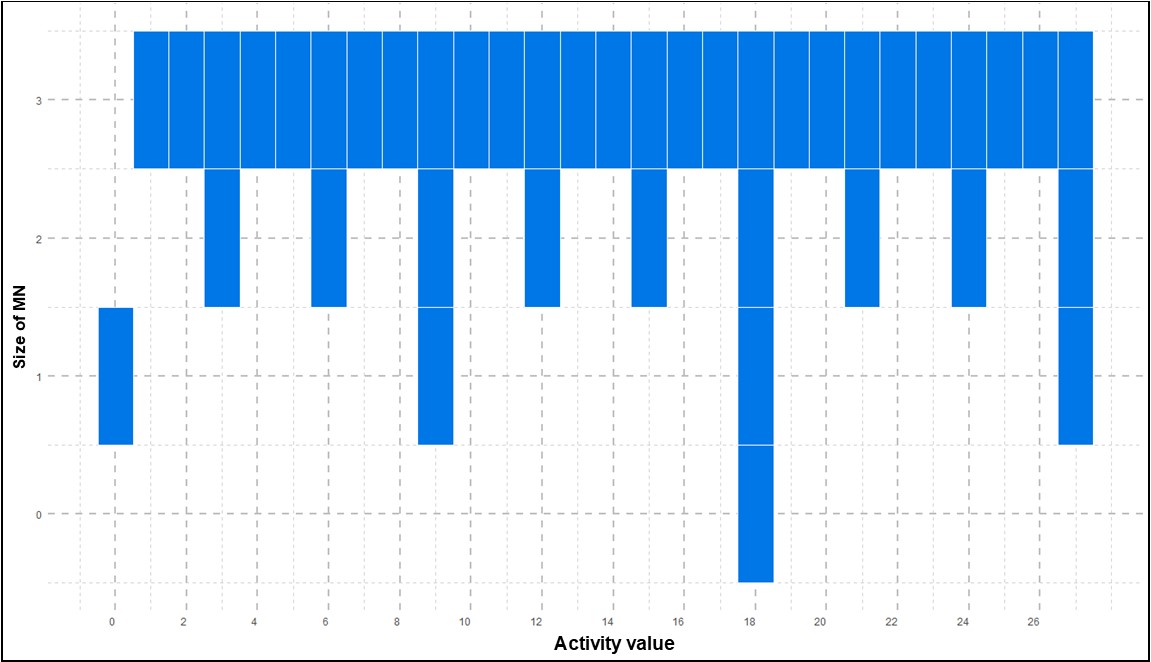}
\caption{Size of MN vs. Activity value for local functions with $\vert A \vert = 3$ and $\vert S \vert = 3$.}
\label{fig2}
\end{figure}

The goal of this paper is to mathematically explain the behaviors of the graphs given in Figures \ref{fig1} and \ref{fig2}. It turns out that these behaviors are independent of the universe $G$, but only depend on the set of the alphabet $A$ and the neighborhood $S$ of the local map $\mu : A^S \to A$. Our main result is the following theorem. 

\begin{theorem}\label{th-1}
Let $G$ be a group and let $A$ be a finite set such that $\vert A \vert \geq 2$. Let $S \subseteq G$ be a finite subset such that $e \in S$ and $\vert S \vert \geq 2$. 
\begin{enumerate}
\item Let $\mu : A^S \to A$ be a local map with minimal neighborhood $S_0$. If $e \in S_0$, then the activity value $\alpha(\mu)$ is a multiple of $\vert A \vert^{|S \setminus S_0|}$. Otherwise, if $e \not \in S_0$, then
\[\alpha(\mu) = \vert A \vert^{\vert S \vert} - \vert A \vert^{\vert S \vert-1}. \]  
\item For all $0 < k < \vert A \vert^{\vert S \vert}$, there exists a local function $\mu : A^S \to A$ whose minimal neighborhood is $S$ and $\alpha(\mu) = k$.
\item If $\vert A \vert \geq 3$, there exists a local function $\mu : A^S \to A$ whose minimal neighborhood is $S$ and $\alpha(\mu) = \vert A \vert^{\vert S \vert}$. 
\end{enumerate}
\end{theorem}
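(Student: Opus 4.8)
The plan is to handle all three parts through one common device: by the characterization of the minimal neighborhood (Propositions \ref{prop-mn} and \ref{le-mms}), a local map $\mu:A^S\to A$ with minimal neighborhood $S_0$ factors as $\mu=\mu_0\circ\pi_{S_0}$, where $\pi_{S_0}:A^S\to A^{S_0}$ is restriction and $\mu_0:A^{S_0}\to A$ genuinely depends on every coordinate of $S_0$; moreover $S_0=S$ is equivalent to $\mu$ depending on every $s\in S$. A pattern $p$ is active exactly when $\mu_0(p\vert_{S_0})\neq p(e)$, so I would organize the count of active transitions by the fibers of $\pi_{S_0}$, each of size $\vert A\vert^{\vert S\setminus S_0\vert}$.

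For Part 1 I would fix $q\in A^{S_0}$ and count the active patterns lying in its fiber. If $e\in S_0$, then $p(e)$ is already determined by $q$, so either all patterns in the fiber are active or none are; summing over the active values of $q$ gives $\alpha(\mu)=\alpha(\mu_0)\cdot\vert A\vert^{\vert S\setminus S_0\vert}$, a multiple of $\vert A\vert^{\vert S\setminus S_0\vert}$. If $e\notin S_0$, then $v:=\mu_0(q)$ is constant on the fiber while $p(e)$ runs freely over $A$ (since $e\in S\setminus S_0$), and exactly the $p$ with $p(e)\neq v$ are active, giving $(\vert A\vert-1)\vert A\vert^{\vert S\setminus S_0\vert-1}$ active patterns per fiber; multiplying by the $\vert A\vert^{\vert S_0\vert}$ fibers yields $(\vert A\vert-1)\vert A\vert^{\vert S\vert-1}=\vert A\vert^{\vert S\vert}-\vert A\vert^{\vert S\vert-1}$. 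This part is routine once the fiber decomposition is set up.

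For Parts 2 and 3 I would fix an identification $A\cong\mathbb{Z}/\vert A\vert\mathbb{Z}$ and construct maps from a chosen active set $T\subseteq A^S$ via the canonical rule $\mu(p)=p(e)+\mathbf{1}_T(p)\bmod\vert A\vert$, so that $\alpha(\mu)=\vert T\vert$ exactly. Writing $S=\{e=s_1,\dots,s_n\}$ and viewing $A^S$ as the grid $A^n$, a direct computation shows that for $i\ge 2$ the map $\mu$ depends on $s_i$ iff $T$ \emph{splits} some line in direction $i$ (a line meeting both $T$ and its complement), while on each $e$-line the map $p(e)+\mathbf{1}_T(p)$ is automatically non-constant when $\vert A\vert\ge 3$, and is non-constant iff that line is monochromatic for $\mathbf{1}_T$ when $\vert A\vert=2$. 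Thus realizing $\alpha(\mu)=k$ with minimal neighborhood $S$ reduces to a combinatorial problem: find $T$ with $\vert T\vert=k$ that splits a line in every direction $i\ge 2$ (plus, when $\vert A\vert=2$, leaves some $e$-line monochromatic). I would solve it with two complementary families anchored at $q_0=(0,\dots,0)$ and the points $q_i$ equal to $q_0$ but with $i$-th coordinate $1$ ($2\le i\le n$). The \emph{low} family puts $q_0\in T$, forbids $q_2,\dots,q_n$, and fills the remaining $\vert A\vert^{\vert S\vert}-n$ positions freely; it splits the direction-$i$ line through $q_0$ for each $i\ge 2$ and realizes every $k\in[1,\vert A\vert^{\vert S\vert}-n+1]$. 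The complementary \emph{high} family ($q_2,\dots,q_n\in T$, $q_0\notin T$) realizes every $k\in[n-1,\vert A\vert^{\vert S\vert}-1]$, and since $2n\le\vert A\vert^{\vert S\vert}+3$ for $n\ge2$, the two ranges cover $(0,\vert A\vert^{\vert S\vert})$. A short count shows that for $\vert A\vert=2$ all $e$-lines are split only if $\vert T\vert=\vert A\vert^{\vert S\vert-1}$, so dependence on $e$ is automatic except at $k=2^{\,n-1}$, where within the low family I would additionally place the whole $e$-line through $q_0$ inside $T$, which is compatible with the splitting requirements.

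For Part 3 full activity is unattainable via the canonical rule, so for $\vert A\vert\ge 3$ I would instead take $\mu(p)=p(e)+1+g(p)\bmod\vert A\vert$ with $g(p)=\mathbf{1}[\,p(s_i)=0\text{ for all }i\ge2\,]$: the offset $1+g(p)\in\{1,2\}$ is never $0\bmod\vert A\vert$, giving $\alpha(\mu)=\vert A\vert^{\vert S\vert}$, while $g$ depends on each $s_i$ with $i\ge2$ and every $e$-line remains non-constant, so the minimal neighborhood is $S$. The main obstacle I anticipate is Part 2: one must pin down the exact cardinality $k$ and force dependence on all $n$ coordinates at once, and the genuinely delicate case is $\vert A\vert=2$ with $k=\vert A\vert^{\vert S\vert-1}$, where securing dependence on $e$ competes with the splitting conditions and must be arranged by hand.
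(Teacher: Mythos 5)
Your proposal is correct, and its relationship to the paper's proof is mixed: Part 1 is essentially the paper's argument, while Parts 2 and 3 use a genuinely different device. For Part 1, your fiber decomposition of the restriction map $\pi_{S_0} : A^S \to A^{S_0}$ is exactly Lemma \ref{le-div} when $e \in S_0$; when $e \notin S_0$, your per-fiber count (each fiber contributes $(\vert A\vert - 1)\vert A \vert^{\vert S \setminus S_0\vert - 1}$ active patterns) replaces the paper's bijection $\psi : \mathcal{P} \to A^{S \setminus \{e\}}$ from Lemma \ref{le-e} --- same content, marginally more direct. For Parts 2 and 3 the paper prescribes the active set $\mathcal{T}$ freely (an active pattern only needs \emph{some} value $\neq z(e)$, so any subset of any admissible size can be realized as $\mathcal{T}$), anchors it with the patterns $0^S$ and $\delta_s$, and forces essentiality via Lemma \ref{le-3}: deltas active and $0^S$ passive when $k \geq \vert S\vert - 1$, the swapped placement when $k < \vert S \vert -1$, with $e$-essentiality free from Lemma \ref{le-e} except at the special value $k = \vert A\vert^{\vert S\vert} - \vert A \vert^{\vert S \vert -1}$, patched by putting $\{0^S, \delta_e\}$ in $\mathcal{P}$. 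Your low/high families are precisely these two placements of the same anchor patterns, so the combinatorial skeleton is identical; what differs is your canonical rule $\mu(p) = p(e) + \mathbf{1}_T(p) \bmod \vert A \vert$, which constructs $\mu$ from $T$ algebraically rather than by arbitrary choice. This device buys a real simplification: for $\vert A \vert \geq 3$ dependence on $e$ is automatic (an $e$-line of $\geq 3$ points cannot be constant when the offsets lie in $\{0,1\}$), so the special value $\vert A\vert^{\vert S\vert} - \vert A \vert^{\vert S \vert -1}$ needs no separate treatment there, and the only hand-patched case is $\vert A \vert = 2$, $k = 2^{\vert S\vert -1}$ --- which you resolve exactly as the paper does, up to the complementation symmetry of Proposition \ref{le-complement}: your monochromatic $e$-line $\{q_0, \delta_e\} \subseteq T$ is the paper's $\{0^S, \delta_e\} \subseteq \mathcal{P}$. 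Your Part 3 map $p(e) + 1 + g(p)$ likewise differs from the paper's choice ($\mu(0^S) = 1$, $\mu(\delta_s) = 2$, arbitrary non-fixed values elsewhere) but verifies the same facts. The quantitative claims all check: the ranges $[1, \vert A\vert^{\vert S\vert} - \vert S\vert + 1]$ and $[\vert S\vert - 1, \vert A\vert^{\vert S\vert} - 1]$ cover $(0, \vert A\vert^{\vert S\vert})$ because $2\vert S\vert \leq \vert A\vert^{\vert S\vert} + 3$, and the patched case is feasible inside the low family because $2 \leq 2^{\vert S\vert - 1} \leq 2^{\vert S\vert} - \vert S\vert + 1$.
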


This paper is an extend version of the paper \cite{CV24} which was presented at AUTOMATA 2024 at Durham University, United Kingdom. Besides various changes in terminology and notation in order to be more accessible to a wider community (e.g., in \cite{CV24} we used the term \emph{minimal memory set} instead of minimal neighborhood, and the term \emph{generating pattern} instead of active transition), we extended all the main results presented in \cite{CV24}. Specifically, in \cite{CV24} were only able to explain some aspects of the behavior of Figures \ref{fig1} and \ref{fig2} (like the fact when $\alpha(\mu)$ is not a multiple of $\vert A \vert$, then the minimal neighborhood of $\mu : A^S \to A$ must be $S$ itself), while the above Theorem \ref{th-1} gives a complete explanation of this behavior.   

The structure of this paper is as follows. In Section 2, we prove some basic facts about the minimal neighborhood and the activity value of a local rule, and we show that these two notions behave well under symmetries induced by automorphisms of the universe $G$ or the alphabet $A$. In Section 3, we present results on the links between the size of the minimal neighborhood and the activity value of a local map $\mu : A^S \to A$, including the proof of Theorem 1. Finally, in Section 4, we discuss some possibilities for future work.

\section{Basic results}

\subsection{Minimal neighborhood}

We assume that the alphabet $A$ is a finite set with at least two elements and that $\{0,1\} \subseteq A$. In this paper, we consider the \emph{shift action} of $G$ on $A^G$ as the function $\cdot : G \times A^G \to A^G$ defined by 
\[ (g \cdot x)(h) := x(hg), \quad \forall x \in A^G, g, h \in G. \]
This indeed satisfies the axioms of a group action, namely $e \cdot x = x$ for all $x \in A^G$, and $g \cdot (h \cdot x) = (gh) \cdot x$, for all $g,h \in G$, $x \in A^G$. However, when $G$ is a nonabelian group, it is important to write $x(hg)$, and not $x(gh)$, in order to be able to verify these axioms. Alternatively, the shift action is sometimes defined (see \cite{CAandG}) as the function $\star :G \times A^G \to A^G$ defined by
\[ (g \star x)(h) := x(g^{-1}h), \quad \forall x \in A^G, g, h \in G. \] 
These two actions are in fact \emph{equivalent group actions} in the sense that there exists a bijection $\phi : A^G \to A^G$, given by $\phi(x)(h) := x(h^{-1})$ such that
\[ \phi( g \cdot x) = g \star \phi(x), \quad \forall g \in G, x \in A^G.  \]

When the universe is $G = \mathbb{Z}$, the configuration space $A^\mathbb{Z}$ may be identified with the set of bi-infinite sequences 
\[ x = \dots x_{-2} x_{-1} x_{0} x_{1} x_{2} \dots \]
for all $x \in A^\mathbb{Z}$, where $x_{k} := x(k) \in A$. The shift action of $\mathbb{Z}$ on $A^\mathbb{Z}$ is equivalent to left and right shifts of the bi-infinite sequences. 

Let $\tau : A^G \to A^G$ be a cellular automaton according to Definition \ref{def-ca}. We say that $\tau$ \emph{admits} a neighborhood $S \subseteq G$ if there exists a local map $\mu : A^S \to A$ that defines $\tau$; this means that  
\[ \tau(x)(g) = \mu( (g \cdot x) \vert_S), \quad \forall x \in A^G, g \in G. \]

\begin{example}\label{Elementary}
Let $G := \mathbb{Z}$ and $S := \{-1,0,1\} \subseteq G$. The relationship between a cellular automaton $\tau : A^G \to A^G$ with local defining map $\mu : A^S \to A$ is described as follows:
\[ \tau( \dots x_{-1} x_{0} x_{1} \dots ) = \dots \mu(x_{-2},x_{-1},x_{0}) \mu(x_{-1},x_0,x_1) \mu(x_{0}, x_1, x_2) \dots \]
In this setting, it is common to define a local map $\mu : A^S \to A$ via a table that enlists all the elements of $z \in A^S$, which are identified with tuples in $z_{-1}z_0 z_1 \in A^3$; explicitly
\[ \begin{tabular}{c|cccccccc}
$z\in A^S$ & $111$ & $110$ & $101$ & $100$ & $011$ & $010$ & $001$ & $000$ \\ \hline
$\mu(z) \in A$ & $a_1$ & $a_2$ & $a_3$ & $a_4$ & $a_5$ & $a_6$ & $a_7$ & $a_8$
\end{tabular}\] 
where $a_i \in A$. When $A=\{0,1\}$, cellular automata that admit a neighborhood $S = \{ -1,0,1\} \subseteq \mathbb{Z}$ are known as \emph{elementary cellular automata} (ECA) \cite[Sec. 2.5]{Kari}, and they are labeled with a \emph{Wolfram number}, which is the decimal number corresponding to the binary number $a_1a_2 \dots a_8$. 
\end{example}

We write $\mu \sim \nu$ if the local maps $\mu : A^S \to A$ and $\nu : A^T \to A$ define the same cellular automaton. This defines an equivalence relation, and it holds that 
\[ \mu \sim \nu \quad \Leftrightarrow \quad \mu( x \vert_S) = \nu( x \vert_T), \ \forall x \in A^G.  \]

\begin{definition}
The \emph{minimal neighborhood} of a cellular automaton $\tau : A^G \to A^G$, denoted by $\MN(\tau)$ is a neighborhood admitted by $\tau$ of smallest cardinality. The minimal neighborhood of a local map $\mu : A^S \to A$, denoted by $\MN(\mu)$, is the minimal neighborhood of the cellular automaton defined by $\mu$.  
\end{definition}

The following result is Proposition 1.5.2 in \cite{CAandG}.

\begin{proposition}\label{prop-mn}
Let $\tau : A^G \to A^G$ be a cellular automaton.
\begin{enumerate}
\item $\tau$ admits neighborhood $S \subseteq G$ if and only if $\MN(\tau) \subseteq S$ and $S$ is finite.
\item The minimal nieghborhood of $\tau$ is unique. 
\end{enumerate}
\end{proposition}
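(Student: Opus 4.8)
The plan is to isolate one structural lemma — that the collection of neighborhoods of $\tau$ is closed under intersection — and then read off both parts of the proposition from it, together with the fact noted just after Definition \ref{def-ca} that every finite superset of a neighborhood is again a neighborhood.

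The core step is the intersection lemma: \emph{if $S_1$ and $S_2$ are neighborhoods of $\tau$, then so is $U := S_1 \cap S_2$.} Let $\mu_1 : A^{S_1} \to A$ and $\mu_2 : A^{S_2} \to A$ be local maps defining $\tau$. I would show that $\tau(x)(e)$ depends only on $x\vert_U$. Given $x, y \in A^G$ with $x\vert_U = y\vert_U$, define the hybrid configuration $z \in A^G$ by $z(g) := x(g)$ for $g \in S_1$ and $z(g) := y(g)$ for $g \notin S_1$. Then $z\vert_{S_1} = x\vert_{S_1}$, while $z\vert_{S_2} = y\vert_{S_2}$, since on $S_2 \cap S_1 = U$ the configurations $z$ and $y$ agree by hypothesis and on $S_2 \setminus S_1$ we have $z = y$ by construction. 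Consequently $\tau(x)(e) = \mu_1(z\vert_{S_1}) = \tau(z)(e) = \mu_2(z\vert_{S_2}) = \tau(y)(e)$. Hence I may define $\mu_0 : A^U \to A$ by $\mu_0(p) := \tau(x)(e)$ for any $x$ extending $p$; well-definedness is exactly what was just proved. A short check using the $G$-equivariance of $\tau$ — namely $\mu_0((g \cdot x)\vert_U) = \tau(g \cdot x)(e) = (g \cdot \tau(x))(e) = \tau(x)(g)$ — shows that $\mu_0$ defines $\tau$, so $U$ is a neighborhood.

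From the lemma I would construct the smallest neighborhood. Fix any neighborhood $S_0$; being finite, it has only finitely many subsets, hence only finitely many neighborhoods contained in it. Let $N$ be the intersection of all neighborhoods contained in $S_0$; since this is a finite intersection, iterating the lemma shows $N$ is a neighborhood, and $N \subseteq S_0$. For an arbitrary neighborhood $T$, the lemma gives that $N \cap T$ is a neighborhood, and $N \cap T \subseteq N \subseteq S_0$, so $N \cap T$ is one of the sets whose intersection defines $N$; therefore $N \subseteq N \cap T \subseteq T$. Thus $N$ is a neighborhood contained in every neighborhood, so it is the unique neighborhood of smallest cardinality, which is part (2). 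Part (1) is then immediate: if $S$ is a neighborhood it is finite and contains $N$, and conversely any finite $S \supseteq N$ is a finite superset of the neighborhood $N$, hence a neighborhood.

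I expect the intersection lemma to be the only real obstacle, and within it the verification that the hybrid $z$ satisfies $z\vert_{S_2} = y\vert_{S_2}$, which is the precise point at which the hypothesis $x\vert_U = y\vert_U$ is used. The remaining arguments are bookkeeping, the one subtlety being that the a priori infinite intersection of all neighborhoods is tamed by first passing inside a single fixed finite neighborhood $S_0$.
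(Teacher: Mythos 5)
Your proof is correct. Note that the paper itself gives no proof of this proposition: it is quoted as Proposition 1.5.2 of the book \cite{CAandG} by Ceccherini-Silberstein and Coornaert. Your argument is essentially the standard one found there — the key intersection lemma, proved via the hybrid configuration $z$ that agrees with $x$ on $S_1$ and with $y$ off $S_1$, is exactly the textbook device, and your subsequent bookkeeping (passing inside a fixed finite neighborhood $S_0$ to make the intersection finite, then showing the resulting $N$ sits inside every neighborhood) correctly yields both the uniqueness of the minimal neighborhood and the characterization of admissible neighborhoods as the finite supersets of $\MN(\tau)$. The one point worth making explicit is that your verification that $\mu_0$ defines $\tau$ invokes $G$-equivariance of $\tau$, which is legitimate here either by the Curtis--Hedlund--Lyndon theorem quoted in the paper or by a one-line direct computation from Definition \ref{def-ca}.
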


For $s \in S$ and $x \in A^S$, we write 
\[ [x]_s := \{ y \in A^S : x \vert_{S \setminus \{s\}} = y \vert_{S \setminus \{ s \}} \}.  \] 

The following is an easy consequence of the definition. 

\begin{lemma}
For any $s \in S$, the set $\{ [x]_s : x \in A^S \}$ is a uniform partition of $A^S$, with $\vert [x]_s \vert = \vert A \vert$, for all $x \in A^S$.
\end{lemma}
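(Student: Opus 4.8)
The plan is to recognize $[x]_s$ as the equivalence classes of a natural equivalence relation on $A^S$, and then to count each class by a single-variable argument. First I would define the relation $\equiv_s$ on $A^S$ by declaring $x \equiv_s y$ if and only if $x \vert_{S \setminus \{s\}} = y \vert_{S \setminus \{s\}}$. Since this relation is exactly the pullback of the equality relation on $A^{S \setminus \{s\}}$ along the restriction map $A^S \to A^{S \setminus \{s\}}$, $r \mapsto r \vert_{S \setminus \{s\}}$, it is automatically reflexive, symmetric, and transitive, and hence an equivalence relation. By construction $[x]_s$ is precisely the equivalence class of $x$ under $\equiv_s$, so the collection $\{ [x]_s : x \in A^S \}$ is the set of equivalence classes of $\equiv_s$ and therefore forms a partition of $A^S$.

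Next I would compute the cardinality of a single block. Fixing $x \in A^S$, I would consider the evaluation map $\varepsilon : [x]_s \to A$ given by $\varepsilon(y) := y(s)$. This map is injective, for if $y, y' \in [x]_s$ satisfy $y(s) = y'(s)$, then $y$ and $y'$ agree on $S \setminus \{s\}$ (both coinciding with $x$ there) and also at $s$, whence $y = y'$. It is surjective, since for any $a \in A$ the pattern $y_a \in A^S$ defined by $y_a(s) := a$ and $y_a(t) := x(t)$ for $t \in S \setminus \{s\}$ lies in $[x]_s$ and satisfies $\varepsilon(y_a) = a$. Thus $\varepsilon$ is a bijection and $\vert [x]_s \vert = \vert A \vert$. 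Since this value is independent of the choice of $x$, every block has the same size and the partition is uniform.

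There is no genuine obstacle here: the statement is a direct unpacking of definitions, and the only step carrying any content is the counting, which reduces to observing that a member of $[x]_s$ is determined by, and free to take, any single value at the one cell $s$. The bookkeeping to confirm that $\varepsilon$ is well defined and bijective is the sole point deserving a moment's attention, and it yields block size $\vert A \vert$ immediately.
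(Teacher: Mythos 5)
Your proof is correct and complete; the paper itself offers no proof of this lemma, dismissing it as ``an easy consequence of the definition,'' and your argument is precisely the natural unpacking of that claim: the sets $[x]_s$ are the fibers of the restriction map $A^S \to A^{S \setminus \{s\}}$ (hence a partition), and evaluation at $s$ gives a bijection from each fiber onto $A$. Nothing is missing and no different route is taken.
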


\begin{definition}\label{essential}
We say that an element $s \in S$ is \emph{essential} for a local map $\mu : A^S \to A$ if there exist $z,w \in A^S$ such that $[z]_s = [w]_s$ but $\mu(z) \neq \mu(w)$. 
\end{definition}

\begin{example}\label{Rule110}
Consider the elementary cellular automaton $\mu : A^{S} \to A$, with $S = \{-1,0,1\}$, given by the following table:
\[ \begin{tabular}{c|cccccccc}
$z\in A^S$ & $111$ & $110$ & $101$ & $100$ & $011$ & $010$ & $001$ & $000$ \\ \hline
$\mu(z) \in A$ & $0$ & $1$ & $1$ & $0$ & $1$ & $1$ & $1$ & $0$
\end{tabular}\] 
This has Wolfram number 110. In this case, all the elements of $S$ are essential for $\mu$. For example, $s:=-1 \in S$ is essential for $\mu$ because we may take $z=111$ and $w = 011$. These patterns satisfy that $[z]_s = \{ z,w \} = [w]_s$, and $\mu(z) = 0 \neq 1 = \mu(w)$, which shows that $s = -1$ is essential for $\mu$. Intuitively, this tells us that we cannot cross out the $-1$ coordinate of the patterns in the table that defines $\mu$.    
\end{example}

\begin{proposition}[c.f. Exercise 1.24 in \cite{ExCA}]\label{le-mms}
Let $\mu : A^S \to A$ be a local map. Then, 
\[ \MN(\mu) = \{ s \in S : s \text{ is essential for } \mu \}. \]
\end{proposition}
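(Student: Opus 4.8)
The plan is to prove the two inclusions $\MN(\mu) \subseteq E$ and $E \subseteq \MN(\mu)$ separately, where I set $E := \{ s \in S : s \text{ is essential for } \mu \}$ and let $\tau$ denote the cellular automaton defined by $\mu$. The two facts I expect to use throughout are Proposition \ref{prop-mn} and the characterization $\mu \sim \nu \Leftrightarrow \mu(x\vert_S) = \nu(x\vert_T)$ for all $x \in A^G$ recorded just above the definition of minimal neighborhood.

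For $\MN(\mu) \subseteq E$, I would show that $E$ is itself a neighborhood admitted by $\tau$; then Proposition \ref{prop-mn}(1) immediately yields $\MN(\tau) \subseteq E$. The heart of this direction is the claim that $\mu$ factors through the restriction map $A^S \to A^E$, i.e.\ that $x\vert_E = y\vert_E$ implies $\mu(x) = \mu(y)$ for all $x, y \in A^S$. Granting this, I define $\mu' : A^E \to A$ by $\mu'(p) := \mu(\hat p)$ for an arbitrary extension $\hat p \in A^S$ of $p$; well-definedness is exactly the factorization claim, and a direct check that $\mu'((g\cdot x)\vert_E) = \mu((g\cdot x)\vert_S)$ for every $x \in A^G$ and $g \in G$ shows $\mu' \sim \mu$, so that $E$ is a neighborhood of $\tau$.

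Proving the factorization claim is the main obstacle, because the definition of \emph{essential} only controls what happens when a \emph{single} coordinate is altered, whereas $x$ and $y$ may differ at many non-essential coordinates simultaneously. The idea is to interpolate one coordinate at a time: enumerate $S \setminus E = \{s_1, \dots, s_m\}$ and build a chain $x = x_0, x_1, \dots, x_m = y$ in which $x_i$ is obtained from $x_{i-1}$ by overwriting coordinate $s_i$ with the value $y(s_i)$. Consecutive terms agree off $\{s_i\}$, so $[x_{i-1}]_{s_i} = [x_i]_{s_i}$, and since $s_i$ is not essential the definition forces $\mu(x_{i-1}) = \mu(x_i)$; composing along the chain gives $\mu(x) = \mu(x_m) = \mu(y)$, and $x_m = y$ because $x$ and $y$ already agree on $E$.

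For $E \subseteq \MN(\mu)$, I would argue by contraposition. Since $S$ is a neighborhood of $\tau$, Proposition \ref{prop-mn}(1) gives $\MN(\tau) \subseteq S$; hence if $s \in S$ satisfies $s \notin \MN(\tau)$, then $\MN(\tau) \subseteq S \setminus \{s\}$, and Proposition \ref{prop-mn}(1) shows that the finite set $S \setminus \{s\}$ is a neighborhood of $\tau$, realized by some local map $\nu : A^{S \setminus \{s\}} \to A$ with $\nu \sim \mu$. Given any $z, w \in A^S$ with $[z]_s = [w]_s$, I extend $z$ and $w$ to global configurations (possible since $A \neq \emptyset$) whose restrictions to $S$ are $z$ and $w$; applying $\mu(x\vert_S) = \nu(x\vert_{S\setminus\{s\}})$ together with $z\vert_{S\setminus\{s\}} = w\vert_{S\setminus\{s\}}$ yields $\mu(z) = \mu(w)$. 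Thus $s$ is not essential, i.e.\ $s \notin E$, which is the desired contrapositive. Combining both inclusions establishes the equality.
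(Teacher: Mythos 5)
Your proof is correct, and your second inclusion ($E \subseteq \MN(\mu)$, argued by contraposition via a local map on $S \setminus \{s\}$) is essentially the paper's argument, which uses the local map on $S_0 = \MN(\mu)$ and phrases the step as a contradiction rather than a contrapositive --- an immaterial difference. The first inclusion, however, is where you genuinely diverge. The paper never removes all non-essential coordinates at once: for each single non-essential $s$ it defines $\mu' : A^{S \setminus \{s\}} \to A$, whose well-definedness is immediate from the definition of non-essential because only one coordinate is dropped, and then invokes Proposition \ref{prop-mn} to conclude $\MN(\mu) \subseteq S \setminus \{s\}$, i.e.\ $s \notin \MN(\mu)$. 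In effect, the paper lets the minimality and uniqueness machinery of Proposition \ref{prop-mn} absorb the combinatorial work of passing from ``each non-essential coordinate can be dropped individually'' to ``$\MN(\mu)$ avoids all of them.'' You instead factor $\mu$ through $A^E$ directly, which forces you to confront the multi-coordinate difficulty head-on; you identify it correctly as the crux, and your interpolation chain $x = x_0, x_1, \dots, x_m = y$, overwriting one non-essential coordinate per step, resolves it cleanly. Your route costs an extra argument but buys something the paper's proof does not give directly: a constructive exhibition of the local map on $E$ and a proof that the set of essential coordinates is itself an admitted neighborhood, independent of the uniqueness statement --- a fact the paper only recovers a posteriori from the equality $\MN(\mu) = E$. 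Either proof is complete and rigorous.
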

\begin{proof}
Let $S_0 := \MN(\mu)$. By Proposition \ref{prop-mn}, we have $S_0 \subseteq S$. Suppose that $s \in S$ is not essential for $\mu$. Define $\mu^\prime : A^{S \setminus \{s\}} \to A$ by $\mu^\prime(y) := \mu(\hat{y})$, for all $y \in A^{S \setminus \{s\}}$, where $\hat{y} \in A^S$ is any extension of $y$. The function $\mu^\prime$ is well-defined because $s$ is not essential for $\mu$, so for all $z,w \in A^S$ with $[z]_s = [w]_s$ we have $\mu(z) = \mu(w)$. Moreover, $\mu \sim \mu^\prime$, so again by Proposition \ref{prop-mn}, we have $S_0 \subseteq S \setminus \{s\}$. Hence, $s \not \in S_0$. 

Conversely, suppose there is $s \in S \setminus S_0$. Let $\mu_0 : A^{S_0} \to A$ be the local map associated with $S_0$ such that $\mu_0 \sim \mu$. If $s$ is essential for $\mu$, there exist $z,w \in A^S$ such that $[z]_s = [w]_s$ but $\mu(z) \neq \mu(w)$. However, $z \vert_{S \setminus \{s\}} = w \vert_{S \setminus \{s\}}$ implies that $\mu_0( z \vert_{S_0}) = \mu_0(w \vert_{S_0})$, as $s \not \in S_0$. This contradicts that $\mu \sim \mu_0$. Therefore, $s$ is not essential for $\mu$.   
\end{proof}

The previous proposition give us a practical algorithm to find the minimal neighborhood of a local map $\mu : A^S \to A$ with time complexity $q n \log_q(n)$, where $q= \vert A \vert$, since the input $\mu$ may be seen as a string of $n = q^{\vert S \vert}$ symbols from $A$.  \bigskip

\begin{algorithm}[H] \label{algo}
\SetAlgoLined
\caption{Minimal neighborhood of a local map.}

\KwIn{Local map $\mu : A^S \to A$}
\KwOut{Minimal neighborhood $\MN(\mu)$}
$\MN(\mu) := \emptyset$\;
\For{$s \in S$}{
   \For{$z \in A^S$}{
   \For{$w \in [z]_s, \ w \neq z$,}{
    \If{$(\mu(z) \neq \mu(w)) $}{$\MN(\mu) := \MN(\mu) \cup \{s\}$\;  \textbf{break}}
  }}}  
\Return $\MN(\mu)$ \medskip
\end{algorithm} \bigskip


\subsection{Activity value}

For the rest of the paper, assume that $S$ is a finite subset of $G$ such that $e \in S$. 

\begin{definition}\label{active}
The set of \emph{active transitions} of a local map $\mu : A^S \to A$, with $e \in S$, is defined by 
\[ \mathcal{T} := \{ z \in A^S : \mu(z) \neq z(e) \},  \]
The \emph{activity value} of $\mu : A^S \to A$ is $\alpha(\mu) := \vert \mathcal{T} \vert$. The set of \emph{passive transitions} of $\mu$ is defined by
\[ \mathcal{P} := A^S \setminus \mathcal{T}. \]   
\end{definition}

It is clear that the number of passive transitions $\mathcal{P}$ also determines the activity value of $\mu : A^S \to A$ because
\[ \alpha(\mu) = \vert A^S \setminus \mathcal{P} \vert.  \]

\begin{example}\label{Rule110}
Consider the elementary cellular automaton $\mu : A^{S} \to A$, with $S = \{-1,0,1\}$, given as in Example \ref{Rule110}. We see from the table that defines $\mu$ the set of active transitions is $\mathcal{T} = \{111, 101,001\}$, so $\alpha(\mu) = 3$, and the set of passive transitions is $\mathcal{P} = \{ 110, 100, 011, 010, 000 \}$.
\end{example}

\begin{lemma}
Let $\mu : A^S \to A$ be a local map with active transitions $\mathcal{T}$ and let $\tau : A^G \to A^G$ be the cellular automaton defined by $\mu$. 
\begin{enumerate}
\item $\tau$ is equal to the identity function if and only if $\mathcal{T} = \emptyset$. 
\item The set of fixed points $\Fix(\tau) := \{ x \in A^G : \tau(x) = x \}$ is equal to the subshift $X_\mathcal{T} \subseteq A^G$ with forbidden patterns $\mathcal{T}$, which is defined by
\[ X_\mathcal{T} := \{ x \in A^G : (g \cdot x)\vert_S \notin \mathcal{T}, \forall g \in G \}.   \] 
\end{enumerate}
\end{lemma}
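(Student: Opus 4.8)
The plan is to reduce both parts to a single pointwise identity: for every configuration $x \in A^G$ and every $g \in G$, the cell $g$ is left unchanged by $\tau$ precisely when the local pattern seen from $g$ is passive. Concretely, I would first establish the equivalence
\[ \tau(x)(g) = x(g) \iff (g \cdot x)\vert_S \notin \mathcal{T}. \]
The proof of this is a direct unwinding of the definitions: by the definition of the shift action with $h = e$ we have $(g \cdot x)(e) = x(eg) = x(g)$, and since $e \in S$ this value equals $((g \cdot x)\vert_S)(e)$. Hence $\tau(x)(g) = \mu((g \cdot x)\vert_S)$ differs from $x(g)$ exactly when $\mu((g \cdot x)\vert_S) \neq ((g\cdot x)\vert_S)(e)$, which by Definition \ref{active} is the statement that $(g \cdot x)\vert_S$ is an active transition.

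For part 2, I would simply quantify this equivalence over all $g \in G$. A configuration $x$ is a fixed point of $\tau$ iff $\tau(x)(g) = x(g)$ for every $g \in G$, and by the equivalence above this holds iff $(g\cdot x)\vert_S \notin \mathcal{T}$ for all $g \in G$, which is exactly the defining condition of $X_\mathcal{T}$. This yields $\Fix(\tau) = X_\mathcal{T}$ with no further work.

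For part 1, the converse direction is immediate: if $\mathcal{T} = \emptyset$, then $(g\cdot x)\vert_S \notin \mathcal{T}$ trivially for all $x$ and $g$, so the pointwise identity forces $\tau(x)(g) = x(g)$ everywhere, i.e.\ $\tau = \id$. For the forward direction I would argue contrapositively: assuming $\mathcal{T} \neq \emptyset$, pick any active transition $z \in \mathcal{T}$ and realize it as a local pattern of some configuration. Since $S$ is finite, I can extend $z : S \to A$ to a configuration $x \in A^G$ with $x\vert_S = z$ (filling in the remaining cells of $G$ arbitrarily); then $(e \cdot x)\vert_S = x\vert_S = z \in \mathcal{T}$, so by the pointwise identity $\tau(x)(e) \neq x(e)$, whence $\tau \neq \id$.

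The arguments are essentially bookkeeping, so I do not anticipate a genuine obstacle; the only points requiring care are the correct use of the (right) shift action $(g\cdot x)(h) = x(hg)$ to verify $(g\cdot x)(e) = x(g)$, and the observation that every pattern over the finite set $S$ is the restriction of an actual configuration, which is what lets the forward direction of part 1 go through.
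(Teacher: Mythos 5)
Your proposal is correct and takes essentially the same route as the paper: both arguments rest on the identity $(g \cdot x)(e) = x(eg) = x(g)$ and the observation that $\tau(x)(g) = x(g)$ precisely when $(g \cdot x)\vert_S$ is a passive transition, with part (2) obtained by quantifying this over all $g \in G$. If anything, your treatment of part (1) is more complete than the paper's, which dismisses it with ``follows from the definition,'' whereas you supply the realization step (extending an active pattern $z \in \mathcal{T}$ to a configuration $x \in A^G$ with $x\vert_S = z$) that the forward direction actually requires.
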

\begin{proof}
Part (1) follows from the definition. For part (2), let $x \in X_{\mathcal{T}}$. Then, $(g \cdot x)\vert_S \not\in \mathcal{T}$, for all $g \in G$, so it follows that $(g \cdot x)\vert_S$ is a passive transition for $\mu$:
\[ \tau(x)(g) = \mu( (g \cdot x) \vert_S) = (g \cdot x)(e) = x(g), \quad \forall g \in G.   \]
Therefore, $x \in \Fix(\tau)$. Conversely, suppose that $x \not \in  X_{\mathcal{T}}$, so there exists $g \in G$ such that $(g \cdot x)\vert_S \in \mathcal{T}$. Then,
\[ \tau(x)(g) = \mu( (g \cdot x) \vert_S)  \neq ( g \cdot x)(e) = x(g).  \]
This shows that $\tau(x) \neq x$, so $x \not \in \Fix(\tau)$.
\end{proof}


\subsection{Symmetries}\label{sec-sym}

In \cite[Prop. 1]{Pedro1}, it was shown that the activity value preserved under \emph{symmetries} of elementary cellular automata. These so-called \emph{symmetries} are defined by the well-known automorphisms of one-dimensional automata of \emph{reflection} and \emph{complementation} (see \cite{CRG20,Wolfram}). In this section, we present the generalized versions of these symmetries when the universe is an arbitrary group $G$, and we show that both the size of the minimal neighborhood and the the activity value are always preserved under these symmetries. 

For each bijection of the alphabet $f : A \to A$ we may define an invertible cellular automaton $f_* : A^G \to A^G$ by 
 \[ f_*(x) := f \circ x, \quad \forall x \in A^G. \]
 Conjugation by $f_*$ induces an automorphism of the monoid of all cellular automata; explicitly, for any cellular automaton $\tau : A^G \to A^G$ we define a cellular automaton $\tau^f : A^G \to A^G$ by
 \[  \tau^f : = (f_*)^{-1} \circ \tau \circ f_*. \] 
When $A=\{0,1\}$ and $f : A \to A$ is the transposition $f(0)=1$ and $f(1)=0$, the cellular automaton $\tau^f$ is precisely the \emph{complement} of $\tau$. 

For each group automorphism $\phi : G \to G$, we may define a homeomorphism (i.e. continuous bijection with continuous inverse with respect to the prodiscrete topology) $\phi^* : A^G \to A^G$ by
\[ \phi^*(x) := x \circ \phi, \quad \forall x \in A^G. \]
The function $\phi^*$ is not a cellular automaton because it is not $G$-equivariant, but it is a \emph{$\phi$-cellular automaton} as introduced in \cite{GCA}. As shown in \cite[Sec. 4]{GCA}, conjugation by $\phi^*$ induces an automorphism of the monoid of all cellular automata; explicitly, for any cellular automaton $\tau : A^G \to A^G$ we define a cellular automaton $\tau^\phi : A^G \to A^G$ by
 \[  \tau^\phi : = (\phi^*)^{-1} \circ \tau \circ \phi^*. \] 
For the case when $G=\mathbb{Z}$ and $\phi : \mathbb{Z} \to \mathbb{Z}$ is the only non-trivial group automorphism given by $\phi(k) := -k$, for all $k \in \mathbb{Z}$, then the cellular automaton $\tau^\phi$ is known as the \emph{reflection} of $\tau$.  

\begin{lemma}
Let $\tau : A^G \to A^G$ be a cellular automaton, let $f : A \to A$ be a bijection, and let $\phi : G \to G$ be an automorphism. 
\begin{enumerate}
\item If $\mu : A^S \to A$ is a local function that defines $\tau$, then a local function $\mu^f : A^S \to A$ for $\tau^f$ is given by
\[ \mu^f (z) = f^{-1} \circ \mu ( f \circ z), \quad \forall z \in A^S.  \]

\item  If $\mu : A^S \to A$ is a local function that defines $\tau$, then a local function $\mu^\phi : A^{\phi(S)} \to A$ of $\tau^f$ is given by
\[ \mu^\phi (w) = \mu( w \circ \phi \vert_S ), \quad \forall w \in A^{\phi(S)}.  \]
\end{enumerate}
\end{lemma}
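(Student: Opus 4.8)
The plan is to prove both parts by direct verification against Definition~\ref{def-ca}: for each candidate local map I will check that feeding it into the neighborhood formula reproduces the conjugated cellular automaton $\tau^f$ (resp.\ $\tau^\phi$). The two ingredients I will keep in hand are the explicit inverses of the conjugating maps, namely $(f_*)^{-1}(x) = f^{-1}\circ x$ and $(\phi^*)^{-1}(x) = x\circ\phi^{-1}$, together with the hypothesis that $\mu$ defines $\tau$, i.e.\ $\tau(x)(g) = \mu((g\cdot x)\vert_S)$ for all $x\in A^G$, $g\in G$.

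For part (1), I would start from $\tau^f = (f_*)^{-1}\circ\tau\circ f_*$ and evaluate at an arbitrary $x\in A^G$, $g\in G$, obtaining $\tau^f(x)(g) = f^{-1}\big(\tau(f\circ x)(g)\big) = f^{-1}\big(\mu((g\cdot(f\circ x))\vert_S)\big)$. The crux is a commutation identity between the shift action and the pointwise alphabet map: since $(g\cdot(f\circ x))(h) = (f\circ x)(hg) = f(x(hg)) = f((g\cdot x)(h))$, we have $g\cdot(f\circ x) = f\circ(g\cdot x)$, and hence $(g\cdot(f\circ x))\vert_S = f\circ((g\cdot x)\vert_S)$. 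Substituting gives $\tau^f(x)(g) = f^{-1}\big(\mu(f\circ((g\cdot x)\vert_S))\big)$, which is exactly $\mu^f((g\cdot x)\vert_S)$ by the defining formula for $\mu^f$. This settles part (1); it is essentially automatic once the commutation identity is observed.

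For part (2), I would proceed the same way, but now the neighborhood itself changes from $S$ to $\phi(S)$, so the bookkeeping is more delicate. Writing $\tau^\phi = (\phi^*)^{-1}\circ\tau\circ\phi^*$ and setting $g' := \phi^{-1}(g)$, I get $\tau^\phi(x)(g) = \tau(x\circ\phi)(\phi^{-1}(g)) = \mu\big((g'\cdot(x\circ\phi))\vert_S\big)$. The key step, and the place where I must use that $\phi$ is a group \emph{homomorphism} and not merely a bijection, is the identity $(g'\cdot(x\circ\phi))(h) = (x\circ\phi)(hg') = x(\phi(h)\phi(g')) = x(\phi(h)g) = (g\cdot x)(\phi(h))$ for every $h$. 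Thus, restricted to $S$, the argument fed to $\mu$ is the map $h\mapsto (g\cdot x)(\phi(h))$ on $S$. On the other side, $\mu^\phi((g\cdot x)\vert_{\phi(S)})$ unfolds to $\mu\big(((g\cdot x)\vert_{\phi(S)})\circ\phi\vert_S\big)$, whose argument is again $h\mapsto (g\cdot x)(\phi(h))$ for $h\in S$, since $\phi(h)\in\phi(S)$. Matching the two expressions yields $\tau^\phi(x)(g) = \mu^\phi((g\cdot x)\vert_{\phi(S)})$, as required.

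I expect the main obstacle to be purely the index- and domain-tracking in part (2): one must be careful that $w\circ\phi$ has domain exactly $S$ (because $\phi$ is a bijection with $\phi(S)$ as the relevant image), that the shift index $g'=\phi^{-1}(g)$ is inserted on the correct side of the group product, and that the homomorphism property is invoked precisely at $\phi(hg')=\phi(h)\phi(g')$. Part (1) carries no real difficulty; the only thing worth isolating there is the commutation lemma $g\cdot(f\circ x)=f\circ(g\cdot x)$, which I would state explicitly, since the analogous (but homomorphism-dependent) identity is the engine of part (2) as well.
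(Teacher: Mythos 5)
Your proof is correct and takes essentially the same approach as the paper: direct verification that the candidate local map reproduces the conjugated cellular automaton via Definition~\ref{def-ca}. The only difference is one of economy: the paper checks the identity only at $g = e$ (implicitly using that $\tau^f$ and $\tau^\phi$ are cellular automata, hence $G$-equivariant and determined by their values at the identity), whereas you verify at arbitrary $g \in G$, which makes the shift-commutation identities $g \cdot (f \circ x) = f \circ (g \cdot x)$ and $\bigl(\phi^{-1}(g) \cdot (x \circ \phi)\bigr)(h) = (g \cdot x)(\phi(h))$ explicit and yields a fully self-contained version of the same computation.
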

\begin{proof}
\begin{enumerate}
\item For all $x \in A^G$ we have
\[ \tau^f(x)(e) = f^{-1} \circ \tau(f \circ x)(e) = f^{-1}(\mu( (f \circ x)\vert_S)).  \]
This shows that the local map $\mu^f : A^S \to A$ defines $\tau^f$. 

\item For all $x \in A^G$ we have
\[ \tau^\phi(x)(e) = \tau(x  \circ \phi) \circ \phi^{-1}(e) = \tau(x \circ \phi)(e) = \mu( (x \circ \phi)\vert_S) =  \mu( x \vert_{\phi(S)} \circ \phi\vert_S ) ,  \]
where we have used the fact that $\phi^{-1}(e)=e$, because $\phi^{-1}$ is an automorphism of $G$. This shows that the local map $\mu^\phi : A^{\phi(S)} \to A$ defines $\tau^f$.
\end{enumerate}
\end{proof}

\begin{proposition}\label{sym-ac}
Let $S$ be a finite subset of $G$ such that $e \in S$. For any local function $\mu : A^S \to A$, any bijection $f : A \to A$ and any automorphism $\phi : G \to G$, we have
\[ \alpha(\mu) = \alpha(\mu^f) = \alpha(\mu^\phi). \]
\end{proposition}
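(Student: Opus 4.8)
The plan is to realize each of the two symmetries as a bijection between the relevant pattern spaces and to check that these bijections carry active transitions to active transitions; since the activity value is by definition the cardinality of the set $\mathcal{T}$ of active transitions, any such bijection forces $\alpha(\mu^f) = \alpha(\mu)$ and $\alpha(\mu^\phi) = \alpha(\mu)$.

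For the alphabet bijection $f : A \to A$, I would start from the formula $\mu^f(z) = f^{-1} \circ \mu(f \circ z)$ established in the previous lemma, and consider the map $\Phi_f : A^S \to A^S$ given by $\Phi_f(z) := f \circ z$. Since $f$ is a bijection of $A$, the map $\Phi_f$ is a bijection of $A^S$. The key step is to rewrite the defining inequality of an active transition: for $z \in A^S$ we have $\mu^f(z) \neq z(e)$ if and only if $f^{-1}(\mu(f \circ z)) \neq z(e)$, and applying the injective map $f$ to both sides, this is equivalent to $\mu(f \circ z) \neq f(z(e))$. Because $(f \circ z)(e) = f(z(e))$, the right-hand condition says exactly that $\Phi_f(z) = f \circ z$ is an active transition of $\mu$. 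Hence $\Phi_f$ restricts to a bijection from the active transitions of $\mu^f$ onto those of $\mu$, giving $\alpha(\mu^f) = \alpha(\mu)$.

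For the group automorphism $\phi : G \to G$, I would use the formula $\mu^\phi(w) = \mu(w \circ \phi\vert_S)$ for $w \in A^{\phi(S)}$, and consider the map $\Psi_\phi : A^{\phi(S)} \to A^S$ given by $\Psi_\phi(w) := w \circ \phi\vert_S$. Since $\phi\vert_S : S \to \phi(S)$ is a bijection, precomposition $\Psi_\phi$ is a bijection of pattern spaces, with inverse $z \mapsto z \circ (\phi\vert_S)^{-1}$. Here the point requiring care is the evaluation at the identity: because $\phi$ is an automorphism we have $\phi(e) = e$, and therefore $(w \circ \phi\vert_S)(e) = w(\phi(e)) = w(e)$. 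Consequently $\mu^\phi(w) \neq w(e)$ if and only if $\mu(\Psi_\phi(w)) \neq \Psi_\phi(w)(e)$, i.e. $w$ is active for $\mu^\phi$ precisely when $\Psi_\phi(w)$ is active for $\mu$. As $\Psi_\phi$ is a bijection, it identifies the two sets of active transitions, so $\alpha(\mu^\phi) = \alpha(\mu)$.

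Neither computation presents a genuine obstacle; the only place where something could go wrong is the bookkeeping of the coordinate $e$, and this is precisely where each symmetry is benign — the bijection $f$ commutes with evaluation at $e$ in the sense $(f \circ z)(e) = f(z(e))$, while $\phi$ fixes $e$ so that $\Psi_\phi$ preserves the value at $e$. If either property failed (for instance, for a shift, or for a set map not fixing the identity) the value at $e$ would move relative to the output and the active/passive dichotomy would no longer be preserved.
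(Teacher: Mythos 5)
Your proof is correct and takes essentially the same approach as the paper: both use the bijections $z \mapsto f \circ z$ on $A^S$ and $w \mapsto w \circ \phi\vert_S$ from $A^{\phi(S)}$ to $A^S$, relying on the same two key observations that $(f \circ z)(e) = f(z(e))$ and $\phi(e) = e$. The only cosmetic difference is that you track active transitions while the paper tracks passive ones (and you verify bijectivity once on the whole pattern space rather than checking the forward and inverse maps separately), which changes nothing of substance.
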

\begin{proof}
Let $\mathcal{P}$, $\mathcal{P}^f$, and $\mathcal{P}^\phi$ be the sets of passive transitions of $\mu$, $\mu^f$, and $\mu^\phi$, respectively. We will show that $f_* : \mathcal{P}^f \to \mathcal{P}^f$ given by $f_*(w) = f \circ w$ for all $w \in \mathcal{P}^f$ is a well-defined bijection. If $w \in \mathcal{P}^f$, then
\[ w(e) =  \mu^f (w) =  f^{-1} \circ \mu ( f \circ w)  \quad \Longrightarrow \quad (f \circ w)(e) = \mu(f \circ w), \] 
which shows that $f \circ w \in \mathcal{P}$. On the other hand, if $z \in \mathcal{P}$, we will show that $f^{-1} \circ z \in \mathcal{P}^f$. Indeed,
\[ \mu^f(f^{-1} \circ z) =  f^{-1} \circ \mu ( f \circ f^{-1} \circ z) = f^{-1} \circ \mu(z) = (f^{-1} \circ z)(e).    \]
This proves that $\vert \mathcal{P} \vert = \vert \mathcal{P}^f \vert$, so $\alpha(\mu) = \alpha(\mu^f)$.  

Now, we will show that $(\phi \vert_S)^* :  \mathcal{P}^\phi \to \mathcal{P}$ given by $(\phi \vert_S)^*(w) = w \circ \phi \vert_S$ for all $w \in A^{\phi(S)}$ is a well-defined bijection. If $w \in \mathcal{P}^\phi$, then
\[  (w \circ \phi \vert_S)(e) =w(e) = \mu^\phi(w) = \mu( w \circ \phi \vert_S ),  \]  
which shows that $w \circ \phi \vert_S \in \mathcal{P}$. On the other hand, we will show that if $z \in \mathcal{P}$, then $z \circ \phi^{-1}\vert_{\phi(S)} \in \mathcal{P}^\phi$. Indeed,
\[ \mu^\phi(z \circ \phi^{-1}\vert_{\phi(S)}) = \mu( z \circ \phi^{-1}\vert_{\phi(S)} \circ \phi \vert_S ) = \mu(z) = z (e) = (z \circ \phi^{-1}\vert_{\phi(S)})(e).    \]
This proves that $\vert \mathcal{P} \vert = \vert \mathcal{P}^\phi \vert$, so $\alpha(\mu) = \alpha(\mu^\phi)$.  
\end{proof}

We finish this section by showing that the size of the minimal neighborhood is also preserved under symmetries. 

\begin{proposition}
For any local function $\mu : A^S \to A$, any bijection $f : A \to A$ and any automorphism $\phi : G \to G$, we have
\[ \MN(\mu) = \MN(\mu^f) \text{ and } \MN(\mu^\phi) = \phi(\MN(\mu)). \]
In particular,
\[ \vert \MN(\mu) \vert = \vert \MN(\mu^f) \vert = \vert \MN(\mu^\phi) \vert.  \]
\end{proposition}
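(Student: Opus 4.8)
The plan is to reduce everything to the characterization of the minimal neighborhood as the set of essential elements (Proposition \ref{le-mms}): it suffices to prove, for each $s \in S$, that $s$ is essential for $\mu$ if and only if $s$ is essential for $\mu^f$, and that $\phi(s)$ is essential for $\mu^\phi$ if and only if $s$ is essential for $\mu$. The observation driving both equivalences is that the natural bijections relating the domains of $\mu$, $\mu^f$ and $\mu^\phi$ interact well with the single-coordinate partition $\{ [x]_s : x \in A^S \}$.

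For the alphabet symmetry, I would consider the bijection $F : A^S \to A^S$ given by $F(z) = f \circ z$. Since $f$ acts coordinatewise, $F$ preserves agreement off a single coordinate: $z\vert_{S \setminus \{s\}} = w\vert_{S \setminus \{s\}}$ if and only if $(f \circ z)\vert_{S \setminus \{s\}} = (f \circ w)\vert_{S \setminus \{s\}}$, so $[z]_s = [w]_s$ if and only if $[F(z)]_s = [F(w)]_s$. Combining this with the identity $\mu^f(z) = f^{-1} \circ \mu(f \circ z)$ and the injectivity of $f^{-1}$, a witnessing pair $(z,w)$ for the essentiality of $s$ with respect to $\mu$ transports to the pair $(F^{-1}(z), F^{-1}(w))$ witnessing essentiality with respect to $\mu^f$. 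The converse direction can be obtained symmetrically, or more economically by noting that $(\mu^f)^{f^{-1}} = \mu$ and applying the forward implication to $\mu^f$ and $f^{-1}$. This yields $\MN(\mu) = \MN(\mu^f)$.

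For the group symmetry, I would use the bijection $\Phi : A^{\phi(S)} \to A^S$ given by $\Phi(w) = w \circ \phi\vert_S$, so that $\Phi(w)(s) = w(\phi(s))$ and $\mu^\phi(w) = \mu(\Phi(w))$. Here the relabeling of coordinates by $\phi$ is what matters: $\Phi(w)$ and $\Phi(w')$ agree at $s \in S$ exactly when $w$ and $w'$ agree at $\phi(s)$, hence $[w]_{\phi(s)} = [w']_{\phi(s)}$ if and only if $[\Phi(w)]_s = [\Phi(w')]_s$. Running the same transport-of-witnesses argument as before, now using that $\Phi$ is a bijection to move freely between the two domains, shows that $\phi(s)$ is essential for $\mu^\phi$ precisely when $s$ is essential for $\mu$, i.e. $\MN(\mu^\phi) = \phi(\MN(\mu))$. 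The final cardinality statement is then immediate: the first equality gives $\vert \MN(\mu) \vert = \vert \MN(\mu^f) \vert$ directly, and the injectivity of $\phi$ gives $\vert \MN(\mu^\phi) \vert = \vert \phi(\MN(\mu)) \vert = \vert \MN(\mu) \vert$.

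I expect the only delicate point to be the bookkeeping in the group case: one must keep track of whether a coordinate lives in $S$ or in $\phi(S)$ and verify that precomposition with $\phi$ sends the index $s$ to $\phi(s)$ in the correct direction. Once the compatibility of $F$ and $\Phi$ with the single-coordinate partitions is established, no further computation is needed beyond invoking Proposition \ref{le-mms}.
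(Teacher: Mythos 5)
Your proposal is correct and takes essentially the same approach as the paper: both reduce the statement to the characterization of $\MN$ as the set of essential elements (Proposition \ref{le-mms}) and transport witnessing pairs through the bijections $z \mapsto f \circ z$ and $w \mapsto w \circ \phi\vert_S$, using their compatibility with the single-coordinate partitions $[\cdot]_s$ and with the values of the local maps. The only difference is presentational: you spell out the witness transport and handle the converse via $(\mu^f)^{f^{-1}} = \mu$, whereas the paper states the two ``if and only if'' equivalences more tersely.
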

\begin{proof}
Observe first that for any $z,w \in A^S$, we have $\mu(z) = \mu(w)$ if and only if $\mu^f( f^{-1} \circ z) = \mu^f( f^{-1} \circ w)$, and, for any $s \in S$, $[z]_s = [w]_s$ if and only if $[f \circ z]_s = [f \circ w]_s$. This shows that $s \in S$ is essential for $\mu$ if and only if $s$ is essential for $\mu^f$, and the first equality follows. 

For the second equality, note that for any $z,w \in A^S$, we have $\mu(z) = \mu(w)$ if and only if $\mu^\phi( z   \circ \phi^{-1}\vert_{\phi(S)} ) = \mu^\phi( w  \circ \phi^{-1}\vert_{\phi(S)})$, and, for any $s \in S$, $[z]_s = [w]_s$ if and only if $[z  \circ \phi^{-1}\vert_{\phi(S)} ]_{\phi(s)} = [w \circ \phi^{-1}\vert_{\phi(S)}]_{\phi(s)}$. This shows that $s \in S$ is essential for $\mu$ if and only if $\phi(s) \in \phi(S)$ is essential for $\mu^\phi$, and the result follows.   
\end{proof}


\section{Connecting the minimal neighborhood and the activity value} 

Our first result connects the activity value of a local map $\mu : A^S \to A$ with the activity value of the ``reduced'' local map with respect to the minimal neighborhood of $\mu$. 

\begin{lemma}\label{le-div}
Let $\mu : A^S \to A$ be a local map and let $S_0 := \MN(\mu)$. Let $\mu_0 : A^{S_0} \to A$ be the local map associated to $S_0$ such that $\mu \sim \mu_0$. If $e \in S_0$, then 
\[ \alpha(\mu) = \alpha(\mu_0) \vert A \vert^{|S \setminus S_0|}.  \]
In particular, $ \vert A \vert^{|S \setminus S_0|}$ divides $\alpha(\mu)$. 
\end{lemma}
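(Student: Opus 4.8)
The plan is to exhibit the set of active transitions of $\mu$ as the preimage of the set of active transitions of $\mu_0$ under the restriction map $A^S \to A^{S_0}$, so that the cardinality formula drops out by counting fibers of uniform size.

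First I would record the defining property of $\mu_0$. Since $S_0 = \MN(\mu) \subseteq S$ and $\mu \sim \mu_0$, the characterization of $\sim$ gives $\mu(z) = \mu_0(z\vert_{S_0})$ for every $z \in A^S$: indeed, every pattern $z \in A^S$ is $x\vert_S$ for some $x \in A^G$, and then $\mu(x\vert_S) = \mu_0(x\vert_{S_0})$. Next, consider the restriction map $\pi : A^S \to A^{S_0}$ defined by $\pi(z) := z\vert_{S_0}$. This map is surjective, and each fiber $\pi^{-1}(z_0)$ consists exactly of the extensions of $z_0$ to $S$, so $\vert \pi^{-1}(z_0) \vert = \vert A \vert^{\vert S \setminus S_0 \vert}$ for every $z_0 \in A^{S_0}$.

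The key step is to translate the activity condition through $\pi$, and here is precisely where the hypothesis $e \in S_0$ enters. Because $e \in S_0$, for every $z \in A^S$ we have $z(e) = (z\vert_{S_0})(e) = \pi(z)(e)$. Combining this with $\mu(z) = \mu_0(\pi(z))$, the pattern $z$ is active for $\mu$ (that is, $\mu(z) \neq z(e)$) if and only if $\mu_0(\pi(z)) \neq \pi(z)(e)$, i.e. if and only if $\pi(z)$ is active for $\mu_0$. Writing $\mathcal{T}$ and $\mathcal{T}_0$ for the active transition sets of $\mu$ and $\mu_0$, this says exactly that $\mathcal{T} = \pi^{-1}(\mathcal{T}_0)$.

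Finally I would count. Since $\mathcal{T}$ is the disjoint union of the fibers $\pi^{-1}(z_0)$ over $z_0 \in \mathcal{T}_0$, each of cardinality $\vert A \vert^{\vert S \setminus S_0 \vert}$, we obtain $\alpha(\mu) = \vert \mathcal{T} \vert = \vert \mathcal{T}_0 \vert \cdot \vert A \vert^{\vert S \setminus S_0 \vert} = \alpha(\mu_0)\, \vert A \vert^{\vert S \setminus S_0 \vert}$, and the divisibility claim follows at once. The only point demanding care is the role of $e \in S_0$: without it the value $z(e)$ would not factor through $\pi$, the activity condition would no longer be constant on the fibers of $\pi$, and $\mathcal{T}$ would fail to be a union of whole fibers; thus this hypothesis is exactly what makes the fiber-counting argument valid.
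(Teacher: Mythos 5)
Your proof is correct and follows essentially the same route as the paper's: both exhibit the active transitions $\mathcal{T}$ of $\mu$ as the full preimage of $\mathcal{T}_0$ under the restriction map $A^S \to A^{S_0}$ and count fibers of uniform size $\vert A\vert^{\vert S \setminus S_0\vert}$. Your write-up is in fact slightly more explicit than the paper's at the one point that matters, namely in isolating where the hypothesis $e \in S_0$ is used (to guarantee $z(e) = \pi(z)(e)$, so that activity is constant on fibers), which the paper leaves implicit in the phrase ``since $\mu \sim \mu_0$.''
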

\begin{proof}
Let $\mathcal{T} \subseteq A^S$ and $\mathcal{T}_0 \subseteq A^{S_0}$ be the sets of active transitions of $\mu$ and $\mu_0$, respectively. Let $f : A^S \to A^{S_0}$ be the restriction function defined by $f(z) := z \vert_{S_0}$, for all $z \in A^S$. This is well-defined because $S_0 \subseteq S$ by Proposition \ref{prop-mn}. Since $\mu \sim \mu_0$, then $f(z) \in \mathcal{T}_0$ if and only if $z \in \mathcal{T}$, so we may consider the restriction $f \vert_{\mathcal{T}} : \mathcal{T} \to \mathcal{T}_0$. Observe that each $w \in \mathcal{T}_0$ has precisely $\vert A \vert^{|S \setminus S_0|}$ preimages under $f \vert_{\mathcal{T}}$. As the preimage sets form a partition of $\mathcal{T}$, we obtain that
\[ \vert \mathcal{T} \vert = \vert \mathcal{T}_0 \vert \vert A \vert^{|S \setminus S_0|}.   \]
The result follows.
\end{proof}

\begin{lemma}\label{le-3}
Let $\mu : A^S \to A$ be a local map with active transitions $\mathcal{T}$ and passive transitions $\mathcal{P}$.
\begin{enumerate}
 \item If there exist $z, w \in \mathcal{P}$, such that $z \neq w$ and $[z]_e = [w]_e$, then $e \in S$ is essential for $\mu$. 
\item For $s \in S \setminus \{e\}$, if there exist $z \in \mathcal{T}$ and $w \in \mathcal{P}$ such that $[z]_s = [w]_s$, then $s$ is essential for $\mu$.
 \end{enumerate}
\end{lemma}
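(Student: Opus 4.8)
The plan is to unwind the definitions of passive and active transitions directly, exploiting the fact that two patterns lying in a common class $[\cdot]_s$ agree on every coordinate except possibly $s$. Recall that $z \in \mathcal{P}$ means $\mu(z) = z(e)$ and $z \in \mathcal{T}$ means $\mu(z) \neq z(e)$, while establishing that $s$ is essential amounts (by Definition \ref{essential}) to exhibiting two patterns in a single class $[\cdot]_s$ on which $\mu$ takes different values. In both parts the hypotheses already supply the two candidate patterns in the required class, so the only work is to check that $\mu$ separates them.

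For part (1), I would start from $z, w \in \mathcal{P}$ with $z \neq w$ and $[z]_e = [w]_e$. The equality of classes gives $z\vert_{S \setminus \{e\}} = w\vert_{S \setminus \{e\}}$, so the only coordinate at which $z$ and $w$ can disagree is $e$; since $z \neq w$, they must in fact disagree there, that is, $z(e) \neq w(e)$. Because both are passive, $\mu(z) = z(e)$ and $\mu(w) = w(e)$, whence $\mu(z) \neq \mu(w)$. As $[z]_e = [w]_e$, this is precisely the witness needed to conclude that $e$ is essential for $\mu$.

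For part (2), I would take $s \in S \setminus \{e\}$ together with $z \in \mathcal{T}$ and $w \in \mathcal{P}$ satisfying $[z]_s = [w]_s$. Here the roles of $e$ and $s$ are what matter: since $s \neq e$, the coordinate $e$ lies in $S \setminus \{s\}$, and $[z]_s = [w]_s$ forces $z(e) = w(e)$. Now $z$ active gives $\mu(z) \neq z(e)$, while $w$ passive gives $\mu(w) = w(e) = z(e)$; combining these yields $\mu(z) \neq \mu(w)$, and together with $[z]_s = [w]_s$ this shows that $s$ is essential.

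There is no genuine obstacle here; both statements reduce to tracking at which coordinate the two given patterns are allowed to differ and then reading off the relation between $\mu$ and the value at $e$. The one point worth stating carefully is the contrast between the two cases: in part (1) it is the coordinate $e$ itself that varies within the class, which is exactly why passivity of \emph{both} patterns forces distinct $\mu$-values, whereas in part (2) it is a coordinate $s \neq e$ that varies, so the value at $e$ is shared and the separation comes instead from the active/passive dichotomy.
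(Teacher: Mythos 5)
Your proof is correct and follows essentially the same route as the paper's: in both parts one checks how the two patterns relate at the coordinate $e$, then uses the passive/active definitions to force $\mu(z) \neq \mu(w)$ within a single class $[\cdot]_s$, which is exactly the witness required by Definition \ref{essential}. In fact your write-up of part (2) is slightly more careful than the paper's, whose displayed chain contains a typo (it writes $\mu(z) = z(e)$ for the \emph{active} pattern $z$, where it should read $\mu(w) = w(e) = z(e) \neq \mu(z)$).
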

\begin{proof}
For part (1), observe that $z(e) \neq w(e)$ because $z \neq w$ and $[z]_e = [w]_e$. Since $z, w \in \mathcal{P}$, we have
\[ \mu(z) = z(e) \neq w(e) = \mu(w).  \]
It follows that $e$ is essential for $\mu$. 

For point (2), observe that $z(e) = w(e)$ because $[z]_s = [w]_s$ and $s \neq e$. Then,
\[ \mu(z) = z(e) = w(e) \neq \mu(w).   \]
It follows that $s$ is essential for $\mu$.  
\end{proof}

\begin{corollary}\label{cor-1}
Let $\mu : A^S \to A$ be a local map with active transitions $\mathcal{T}$ and passive transitions $\mathcal{P}$.
\begin{enumerate}
\item If $e \in S$ is not essential for $\mu$, then $[z]_e \cap \mathcal{P} = \{ z \}$, for all $z \in \mathcal{P}$.
\item If $s \in S \setminus \{e\}$ is not essential for $\mu$, then $[z]_s \subseteq \mathcal{T}$, for all $z \in \mathcal{T}$.
\end{enumerate}
\end{corollary}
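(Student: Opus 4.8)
The plan is to obtain both parts as the contrapositives of the two parts of Lemma \ref{le-3}, which already carry out all the combinatorial work; the corollary merely repackages them in terms of the blocks $[z]_e$ and $[z]_s$. The one bookkeeping point to keep in mind throughout is that membership $w \in [z]_s$ is symmetric and in fact equivalent to the equality of blocks $[w]_s = [z]_s$, since both say $z\vert_{S \setminus \{s\}} = w\vert_{S \setminus \{s\}}$; this is exactly what lets me feed a pair $z,w$ lying in a common block into Lemma \ref{le-3}.

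For part (1), I would fix $z \in \mathcal{P}$ and note that $z \in [z]_e \cap \mathcal{P}$ holds trivially, so the content is the reverse inclusion $[z]_e \cap \mathcal{P} \subseteq \{z\}$. Suppose, for contradiction, there were some $w \in [z]_e \cap \mathcal{P}$ with $w \neq z$. Then $z,w$ are two distinct passive transitions with $[z]_e = [w]_e$, so Lemma \ref{le-3}(1) forces $e$ to be essential for $\mu$, contradicting the hypothesis that $e$ is not essential. Hence no such $w$ exists and $[z]_e \cap \mathcal{P} = \{z\}$.

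For part (2), I would fix a non-essential $s \in S \setminus \{e\}$ and $z \in \mathcal{T}$, and aim to rule out any passive element in the block $[z]_s$. If some $w \in [z]_s$ lay in $\mathcal{P}$, then $z \in \mathcal{T}$ and $w \in \mathcal{P}$ share the block $[z]_s = [w]_s$, so Lemma \ref{le-3}(2) would make $s$ essential, again a contradiction. Therefore every element of $[z]_s$ is active, i.e.\ $[z]_s \subseteq \mathcal{T}$.

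I do not expect any genuine obstacle here: the only subtlety is the symmetry of block membership just mentioned, and once that is noted each statement is an immediate contrapositive. If one prefers, both parts can instead be phrased directly from the definition of essentiality (a non-essential $s$ means $\mu$ is constant in the sense that $[z]_s = [w]_s \Rightarrow \mu(z) = \mu(w)$) without explicitly citing Lemma \ref{le-3}, but routing through the lemma keeps the argument shortest.
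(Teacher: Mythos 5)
Your proof is correct and is exactly the paper's argument: the paper proves Corollary \ref{cor-1} in one line as ``the contrapositive of Lemma \ref{le-3},'' which is precisely what you carry out, just with the block-membership bookkeeping ($w \in [z]_s \Leftrightarrow [w]_s = [z]_s$) made explicit. No differences to report.
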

\begin{proof}
This is just the contrapositive of Lemma \ref{le-3}.
\end{proof}

\begin{lemma}\label{le-e}
Let $\mu : A^S \to A$ be a local map with $e \in S$ and with $\vert S \vert \geq 2$. If $e$ is not essential for $\mu$, then 
\[ \alpha(\mu) = \vert A \vert^{\vert S \vert} - \vert A \vert^{\vert S \vert-1}.  \]
\end{lemma}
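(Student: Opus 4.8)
The plan is to count the passive transitions $\mathcal{P}$ directly, exploiting the partition of $A^S$ into the $e$-blocks $\{[z]_e : z \in A^S\}$. Since the sets $[x]_e$ form a uniform partition of $A^S$ into blocks of cardinality $\vert A \vert$, there are exactly $\vert A \vert^{\vert S \vert - 1}$ such blocks, and within any single block only the value at the coordinate $e$ varies. Because $\alpha(\mu) = \vert A^S \setminus \mathcal{P} \vert = \vert A \vert^{\vert S \vert} - \vert \mathcal{P} \vert$, it suffices to prove that $\vert \mathcal{P} \vert = \vert A \vert^{\vert S \vert - 1}$, or equivalently that each $e$-block contains exactly one passive transition.

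First I would translate the hypothesis: by the contrapositive of Definition \ref{essential}, $e$ not being essential means that $\mu(z) = \mu(w)$ whenever $[z]_e = [w]_e$, so $\mu$ is constant on each $e$-block. Fix a block $[z]_e$ and let $c \in A$ be the common value of $\mu$ on it. The decisive observation is that a pattern $y \in [z]_e$ is passive exactly when $\mu(y) = y(e)$, that is, when $y(e) = c$. As $y$ ranges over $[z]_e$ the coordinate value $y(e)$ attains each element of $A$ exactly once, so precisely one $y$ in the block satisfies $y(e) = c$. Summing this single passive transition over the $\vert A \vert^{\vert S \vert - 1}$ blocks yields $\vert \mathcal{P} \vert = \vert A \vert^{\vert S \vert - 1}$, and hence $\alpha(\mu) = \vert A \vert^{\vert S \vert} - \vert A \vert^{\vert S \vert - 1}$.

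I do not expect a serious obstacle here, as the whole argument is a clean counting over the block partition; the only point that genuinely requires care is the word \emph{exactly} in ``exactly one passive transition per block''. The \emph{uniqueness} half (at most one) is already recorded in Corollary \ref{cor-1}(1), which gives $[z]_e \cap \mathcal{P} = \{z\}$ for each $z \in \mathcal{P}$. The \emph{existence} half (at least one) is precisely what the constancy of $\mu$ on the block, combined with the fact that $y \mapsto y(e)$ is a surjection from $[z]_e$ onto $A$, supplies; this is the step I would be most careful to state explicitly rather than leave implicit.
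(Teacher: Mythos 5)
Your proof is correct and is essentially the paper's own argument in a different guise: the paper proves that the restriction map $\psi : \mathcal{P} \to A^{S \setminus \{e\}}$ is a bijection, which is precisely your statement that each $e$-block contains exactly one passive transition, and both rest on the same two ingredients (constancy of $\mu$ on each block $[z]_e$ for existence, and Corollary \ref{cor-1}(1) --- or your direct characterization --- for uniqueness). If anything, your phrasing streamlines the paper's injectivity/surjectivity split into the single observation that $y \in [z]_e$ is passive if and only if $y(e) = c$, where $c$ is the constant value of $\mu$ on the block.
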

\begin{proof}
Since $\vert S \vert \geq 2$, we have $A^{S \setminus \{ e\}} \neq \emptyset$. Let $\mathcal{P}$ be the set of passive transitions of $\mu$. We shall prove that the function $\psi : \mathcal{P} \to  A^{S \setminus \{ e\}}$, defined by 
\[ \psi(z) := z \vert_{S \setminus \{e\}}, \quad \forall z \in \mathcal{P}, \]
is a bijection.
\begin{itemize}
\item We will show that $\psi$ is injective. if $\psi(z) = \psi(w)$ for some $z,w \in \mathcal{P}$, then $[z]_e = [w]_e$. If follows by Corollary \ref{cor-1} (1) that
\[  [z]_e \cap \mathcal{P} = \{z\} \quad \text{ and } \quad  [w]_e \cap \mathcal{P} = \{w\} . \]
Therefore, $z=w$. It follows that $\psi$ is injective. 

\item We will show that $\psi$ is surjective. We claim that for every $y \in A^{S \setminus \{e\}}$ there exists $\hat{y} \in \mathcal{P}$ such that $\hat{y}\vert_{S \setminus \{e\}} = y$. First observe that it follows from definition that if $e$ is not essential for $\mu$, then $\mu$ is constant on $[x]_e$ for all $x \in A^S$. Let $w \in A^S$ be any extension of $y$ and suppose that $\mu(w^\prime) = a$ for all $w^\prime \in [w]_e$. Choose $\hat{y} \in [w]_e$ such that $\hat{y}(e) = a$. Then, $\hat{y}$ is a passive transition for $\mu$, so $\hat{y} \in \mathcal{P}$. Clearly, $\hat{y} \vert_{S \setminus \{e\}} = w\vert_{S \setminus \{e\}} = y$, so the function $\psi$ is surjective.      
\end{itemize}     

Since $\psi$ is a bijection, then $\vert \mathcal{P} \vert = \vert A \vert^{\vert S \vert-1}$, which is equivalent to 
\[ \alpha(\mu) = \vert A^S \setminus \mathcal{P} \vert = \vert A \vert^{\vert S \vert} - \vert A \vert^{\vert S \vert-1}. \] 
\end{proof}

\begin{remark}
The case when the activity value is $\alpha(\mu) = \vert A \vert^{\vert S \vert} - \vert A \vert^{\vert S \vert-1}$ seems to be special for other reasons, besides the case when $e$ is not essential for $\mu$. For example, if $A$ is a finite field and $\mu : A^S \to A$ is a local linear map different from the projection to $e$, then we must also have $\alpha(\mu) = \vert A \vert^{\vert S \vert} - \vert A \vert^{\vert S \vert-1}$. This may be been seen as the set of passive transitions $\mathcal{P}$ of $\mu$ is the kernel of the nonzero linear map $\mu - \pi_e$, where $\pi_e : A^S \to A$ is the projection to $e$ defined by $\pi_e(x) = x(e)$, for all $x \in A^G$, so by the Rank-Nullity theorem we deduce that
\[ \dim(\mathcal{P}) = \dim(A^S) - 1 = \vert S \vert - 1.   \]
This shows that $\vert \mathcal{P} \vert = \vert A \vert^{\vert S \vert-1}$, so $\alpha(\mu) = \vert A \vert^{\vert S \vert} - \vert A \vert^{\vert S \vert-1}$.
\end{remark}

\begin{corollary}\label{not-multiple}
Let $\mu : A^S \to A$ be a local map with $\vert S \vert \geq 2$. If $\vert A \vert \nmid \alpha(\mu)$, then $\MN(\mu) =  S$.
\end{corollary}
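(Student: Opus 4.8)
The plan is to prove the contrapositive: assuming $\MN(\mu) \neq S$, I will show that $\vert A \vert$ divides $\alpha(\mu)$. Write $S_0 := \MN(\mu)$. By Proposition \ref{prop-mn} we have $S_0 \subseteq S$, so the assumption $S_0 \neq S$ forces $S_0 \subsetneq S$ and hence $\vert S \setminus S_0 \vert \geq 1$. From here I would split into two cases according to whether $e$ belongs to $S_0$.

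In the first case, $e \in S_0$. Let $\mu_0 : A^{S_0} \to A$ be the reduced local map with $\mu \sim \mu_0$. Then Lemma \ref{le-div} applies directly and yields $\alpha(\mu) = \alpha(\mu_0) \vert A \vert^{\vert S \setminus S_0 \vert}$. Since $\vert S \setminus S_0 \vert \geq 1$, the factor $\vert A \vert^{\vert S \setminus S_0 \vert}$ is divisible by $\vert A \vert$, and therefore so is $\alpha(\mu)$.

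In the second case, $e \notin S_0$. By Proposition \ref{le-mms}, $S_0$ is exactly the set of essential elements of $\mu$, so $e$ is not essential for $\mu$. Since $\vert S \vert \geq 2$ by hypothesis, Lemma \ref{le-e} gives
\[ \alpha(\mu) = \vert A \vert^{\vert S \vert} - \vert A \vert^{\vert S \vert - 1} = \vert A \vert^{\vert S \vert - 1}(\vert A \vert - 1). \]
As $\vert S \vert - 1 \geq 1$, this quantity is again divisible by $\vert A \vert$.

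In both cases $\vert A \vert$ divides $\alpha(\mu)$, which establishes the contrapositive and hence the corollary. I expect no genuine obstacle here: the argument is a short case analysis that simply invokes Lemmas \ref{le-div} and \ref{le-e}. The only point requiring a little care is verifying that the relevant exponent is at least one in each case --- this is precisely where the hypotheses $S_0 \subsetneq S$ (giving $\vert S \setminus S_0 \vert \geq 1$) and $\vert S \vert \geq 2$ (giving $\vert S \vert - 1 \geq 1$) are used --- so that the power of $\vert A \vert$ really contributes a factor of $\vert A \vert$ to $\alpha(\mu)$.
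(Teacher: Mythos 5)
Your proof is correct and follows essentially the same route as the paper: the paper also combines Lemma \ref{le-e} (to handle the case where $e$ is not essential, i.e.\ $e \notin \MN(\mu)$) with Lemma \ref{le-div} (to handle $e \in \MN(\mu)$), merely phrased as a direct argument from $\vert A \vert \nmid \alpha(\mu)$ rather than as an explicit contrapositive with two cases. The reorganization is cosmetic; both arguments rest on exactly the same two lemmas and the same dichotomy.
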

\begin{proof}
If $e$ is not essential for $\mu$, then Lemma \ref{le-e} shows that $\vert A \vert$ divides $\alpha(\mu)$. Hence, $e$ must be essential for $\mu$. By Lemma \ref{le-div}, $ \vert A \vert^{|S \setminus S_0|}$ divides $\alpha(\mu)$, so we must have $|S \setminus \MN(\mu)|=0$. It follows that $\MN(\mu) =  S$.
\end{proof}

The previous corollary already allows us to improve Algorithm \ref{algo} for $\vert S \vert \geq 2$: we may begin by counting the activity value of $\alpha(\mu)$, and if it is not a multiple of $\vert A \vert$, then $\MN(\mu) =  S$.

For $s \in S$, define the pattern $\delta_s \in A^S$ by $\delta_s(t) = \delta_{s,t}$, for all $t \in S$, where $\delta_{s,t}$ is the Kronecker delta function (i.e. $\delta_s(s)=1$ and $\delta_s(t) = 0$ for all $t \in S \setminus \{s\}$). For $a \in A$, let $a^S \in A^S$ be the constant pattern given by $a^S(s) = a$, for all $s \in S$.

\begin{theorem}\label{th-last}
Let $S \subseteq G$ be a finite subset such that $e \in S$ and $\vert S \vert \geq 2$. For all $0 < k < \vert A \vert^{\vert S \vert}$ there exists a local function $\mu : A^S \to A$ such that
\[ \MN(\mu) = S  \quad \text{ and } \quad \alpha(\mu) = k.  \]
Furthermore, if $\vert A \vert \geq 3$, there exists also a local function $\nu : A^S \to A$ such that
\[ \MN(\nu) = S  \quad \text{ and } \quad  \alpha(\mu) = \vert A \vert^{\vert S \vert}.  \]
\end{theorem}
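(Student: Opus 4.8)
The plan is to construct, for each prescribed activity value, an explicit local map and to force its minimal neighborhood to equal $S$ through a single easily-checked device. Throughout write $q := \vert A \vert$ and $n := \vert S \vert$. The key observation is that for every $s \in S$ the patterns $0^S$ and $\delta_s$ agree on $S \setminus \{s\}$ and differ only at $s$, so they lie in a common $s$-class, $[0^S]_s = [\delta_s]_s$. Hence, by Definition \ref{essential}, if a local map $\mu$ satisfies $\mu(0^S) \neq \mu(\delta_s)$, then $s$ is essential for $\mu$. Consequently, any $\mu$ with $\mu(0^S) \neq \mu(\delta_s)$ for all $s \in S$ automatically has $\MN(\mu) = S$. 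The whole problem therefore reduces to hitting a prescribed activity value while respecting these $n$ inequalities among the $n+1$ distinguished patterns $0^S$ and $\delta_s$, $s \in S$.

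First I would dispose of the endpoint $k = 1$ directly. Picking any $z_0 \in A^S$, setting $\mu(z_0)$ to be any value different from $z_0(e)$, and letting $\mu$ be passive elsewhere gives $\alpha(\mu) = 1$; essentiality is then checked by hand: for $s \neq e$ the class $[z_0]_s$ contains the active $z_0$ together with passive patterns of common value $z_0(e)$, and since $q^{n-1} \geq 2$ there is an $e$-class avoiding $z_0$ on which the passive map takes all $q$ distinct values, settling $e$. For $q \geq 3$ the rest of the range is then handled by one construction. I would fix $\mu(0^S) := 1$ and $\mu(\delta_e) := 0$, both active, and select exactly $k-2$ of the remaining $q^n - 2$ patterns to be active; for an active $\delta_s$ with $s \neq e$ I assign a value outside $\{0,1\}$ (possible precisely because $q \geq 3$), so that $\mu(\delta_s) \neq 1 = \mu(0^S)$, while every other active pattern takes any value distinct from its centre symbol. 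Every such map satisfies the essentiality device, so $\MN(\mu) = S$, and letting $k-2$ range over $\{0, \dots, q^n - 2\}$ realises every $\alpha(\mu) \in \{2, \dots, q^n\}$. This simultaneously covers the remaining cases of the first statement and the value $\alpha = q^n$ of the second, the latter being the all-active end of the family.

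The genuinely delicate case is the binary alphabet $q = 2$, where an active pattern has no choice of value ($\mu(z) = 1 - z(e)$ is forced), so the activity value is determined by the active set alone and the flexibility exploited above disappears. Here the device $\mu(0^S) \neq \mu(\delta_s)$ becomes a parity condition on statuses: $0^S$ and $\delta_e$ must share their active/passive status, while $0^S$ and each $\delta_s$ with $s \neq e$ must have opposite status. The two admissible orientations (with $0^S$ active, forcing $2$ active distinguished patterns, or with $0^S$ passive, forcing $n-1$) let the remaining $2^n - n - 1$ patterns sweep the overlapping intervals $\{2, \dots, 2^n - n + 1\}$ and $\{n-1, \dots, 2^n - 2\}$, whose union covers $\{2, \dots, 2^n - 2\}$; the endpoints $k = 1$ and $k = 2^n - 1$ are then supplied by the single-active map above and the dual single-passive map, whose essentiality is verified exactly as before.

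The main obstacle is precisely this reconciliation of an exact cardinality with the rigid essentiality constraints near the extremes of the range: fixing the behaviour of the distinguished patterns pins down how many of them are active, and near $k = 1$ or $k = q^n$ there is little slack left to adjust. I expect this to be the only place requiring care, and it is the reason the argument splits according to whether $q \geq 3$ (where each active pattern carries at least two admissible values, letting one orientation sweep almost the entire range) or $q = 2$ (where the forced values demand both orientations together with the two single-pattern extremes).
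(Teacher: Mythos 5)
Your proof is correct, and it takes a genuinely different route from the paper's. Your organizing idea is a single essentiality device: prescribe the values of $\mu$ on the $\vert S\vert+1$ distinguished patterns $0^S$ and $\delta_s$ so that $\mu(0^S)\neq\mu(\delta_s)$ for all $s\in S$; since $[0^S]_s=[\delta_s]_s$, this forces $\MN(\mu)=S$ outright, and you then split on the alphabet size. For $\vert A\vert\geq 3$ a spare letter gives active patterns a choice of value, so one family sweeps every $k\in\{2,\dots,\vert A\vert^{\vert S\vert}\}$ (with $k=1$ handled by a one-active-pattern map); for $\vert A\vert=2$ the forced values turn the device into a parity constraint on active/passive statuses, and you correctly combine the two admissible orientations, whose ranges $\{2,\dots,2^{\vert S\vert}-\vert S\vert+1\}$ and $\{\vert S\vert-1,\dots,2^{\vert S\vert}-2\}$ indeed cover $\{2,\dots,2^{\vert S\vert}-2\}$ for $\vert S\vert\geq 2$, with the single-active and single-passive maps supplying $k=1$ and $k=2^{\vert S\vert}-1$. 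The paper splits instead on the value of $k$: for generic $k$ it prescribes only the \emph{statuses} of $0^S$ and the $\delta_s$, gets essentiality of each $s\neq e$ from Lemma \ref{le-3}(2), and gets essentiality of $e$ for free from the contrapositive of Lemma \ref{le-e} (any map with $\alpha(\mu)\neq\vert A\vert^{\vert S\vert}-\vert A\vert^{\vert S\vert-1}$ has $e$ essential); the exceptional value $k=\vert A\vert^{\vert S\vert}-\vert A\vert^{\vert S\vert-1}$ is then treated separately by placing the passive pair $\{0^S,\delta_e\}$ in one $e$-class and invoking Lemma \ref{le-3}(1). Your device is exactly the paper's construction for $k=\vert A\vert^{\vert S\vert}$, promoted to the entire proof. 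What your route buys is self-containedness and uniformity in $k$: you never need Lemma \ref{le-e}, and the value $\vert A\vert^{\vert S\vert}-\vert A\vert^{\vert S\vert-1}$ requires no special case; what it costs is the binary-alphabet parity analysis, which the paper avoids entirely because status-only constraints are insensitive to $\vert A\vert$. Your endpoint essentiality checks are instances of Lemma \ref{le-3} (an active and a passive pattern in a common $s$-class for $s\neq e$; two distinct passive --- or, in the binary all-but-one-active case, two distinct active --- patterns in a common $e$-class), and I find no gap.
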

\begin{proof}
For all the required values of $k$, we will choose a local function $\mu : A^S \to A$ with $k:= \alpha(\mu)$, active transitions $\mathcal{T}$, and passive transitions $\mathcal{P}$, satisfying that $\MN(\mu) = S$  

First assume that $k \neq \vert A \vert^{\vert S \vert}$ and $k \neq \vert A \vert^{\vert S \vert} - \vert A \vert^{\vert S \vert-1}$. By Lemma \ref{le-e}, any choice of $\mu : A^S \to A$ with activity value $k$ satisfies that $e$ is essential for $\mu$. If $k \geq \vert S \vert -1$, we may choose $\mu$ with $\mathcal{T}$ and $\mathcal{P}$ such that 
\[ \{ \delta_s : s \in S \setminus \{e\} \} \subseteq \mathcal{T} \text{ and } 0^S \in \mathcal{P}. \]
By Lemma \ref{le-3} (2), every element of $S \setminus \{e\}$ is essential for $\mu$, so the minimal neighborhood of $\mu$ is $S$.

If $k < \vert S \vert -1$, then
\[ \vert\mathcal{P}\vert = \vert A \vert^{\vert S \vert} - k \geq \vert A \vert^{\vert S \vert } - \vert S \vert +1. \]
In particular, $\vert \mathcal{P} \vert \geq \vert S \vert - 1$, so we may choose $\mu$ with $\mathcal{T}$ and $\mathcal{P}$ such that 
\[ \{ \delta_s : s \in S \setminus \{e\} \} \subseteq \mathcal{P} \text{ and } 0^S \in \mathcal{T}. \]
By Lemma \ref{le-3} (2), every element of $S \setminus \{e\}$ is essential for $\mu$, so the minimal neighborhood of $\mu$ is $S$.

Suppose that $k = \vert A \vert^{\vert S \vert} - \vert A \vert^{\vert S \vert-1}$. Since $\vert S \vert \geq 2$, then $k \geq \vert S \vert - 1$ and $\vert \mathcal{P} \vert = \vert A \vert^{\vert S \vert -1} \geq \vert A \vert \geq 2$. Then, we may choose $\mu$ with $\mathcal{T}$ and $\mathcal{P}$ such that 
\[ \{ \delta_s : s \in S \setminus \{e\} \} \subseteq \mathcal{T} \text{ and } \{ 0^S, \delta_e \} \in \mathcal{P}. \]
By Lemma \ref{le-3} (2), every element of $S \setminus \{e\}$ is essential for $\mu$, any by Lemma \ref{le-3} (1), $e$ is also essential for $\mu$, so $\MN(\mu)=S$.

Finally, assume that $k =  \vert A \vert^{\vert S \vert}$ and $\vert A \vert \geq 3$, so we may suppose that $\{0,1,2 \} \subseteq A$. Choose $\mu : A^S \to A$ such that 
\[ \mu(0^S) = 1, \quad \text{ and } \quad \mu(\delta_s) = 2, \ \forall s \in S. \]
It follows by definition that $\MN(\mu)=S$, and the result follows.
\end{proof}

The following corollary finally explains Figures \ref{fig1} and \ref{fig2}.

\begin{corollary}\label{cor-last}
Let $S \subseteq G$ be a finite subset such that $e \in S$ and $\vert S \vert \geq 2$. For all $0 < r \leq \vert S\vert$ and for all $0 < k < \vert A \vert^{\vert S\vert}$ such that $\vert A \vert^{\vert S \vert - r} \mid k$ there exists a local function $\mu : A^S \to A$ such that
\[ \vert\MN(\mu)\vert = r  \quad \text{ and } \quad \alpha(\mu) = k.  \]
Furthermore, if $\vert A \vert \geq 3$, there exists also a local function $\nu : A^S \to A$ such that
\[ \vert\MN(\nu)\vert = r  \quad \text{ and } \quad  \alpha(\mu) = \vert A \vert^{\vert S \vert}.  \]
\end{corollary}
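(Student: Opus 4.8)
The plan is to reduce the statement to Theorem \ref{th-last} by ``splitting off'' the inessential coordinates and invoking Lemma \ref{le-div}. Given $r$ and $k$ as in the hypothesis, I would first choose a subset $S_0 \subseteq S$ with $e \in S_0$ and $\vert S_0 \vert = r$; this is possible because $e \in S$ and $0 < r \leq \vert S \vert$. Since $\vert A \vert^{\vert S \vert - r} \mid k$ and $0 < k < \vert A \vert^{\vert S \vert}$, the integer $k' := k / \vert A \vert^{\vert S \vert - r}$ satisfies $0 < k' < \vert A \vert^{r} = \vert A \vert^{\vert S_0 \vert}$, which is exactly the range of activity values produced by Theorem \ref{th-last} on the neighborhood $S_0$.

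Assuming $r \geq 2$, I would apply Theorem \ref{th-last} to the neighborhood $S_0$ to obtain a local map $\mu_0 : A^{S_0} \to A$ with $\MN(\mu_0) = S_0$ and $\alpha(\mu_0) = k'$ (for part (b), with $\vert A \vert \geq 3$, I would instead take $\alpha(\mu_0) = \vert A \vert^{r} = \vert A \vert^{\vert S_0 \vert}$). Then I extend $\mu_0$ to $\mu : A^S \to A$ by $\mu(z) := \mu_0(z \vert_{S_0})$. As explained in the introduction, this extension satisfies $\mu \sim \mu_0$, so $\MN(\mu) = \MN(\mu_0) = S_0$ and hence $\vert \MN(\mu) \vert = r$. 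Finally, because $e \in S_0 = \MN(\mu)$, Lemma \ref{le-div} gives
\[ \alpha(\mu) = \alpha(\mu_0)\, \vert A \vert^{\vert S \setminus S_0 \vert} = k'\, \vert A \vert^{\vert S \vert - r} = k, \]
as required (and $=\vert A \vert^{\vert S \vert}$ for part (b)).

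The main obstacle is the base case $r = 1$, which Theorem \ref{th-last} does not cover since it requires a neighborhood of size at least $2$. Here the admissible values are $k = j\, \vert A \vert^{\vert S \vert - 1}$ with $1 \leq j \leq \vert A \vert - 1$, and I would build $\mu$ directly from a single coordinate. When $\vert A \vert \geq 3$ I would set $\mu := g \circ \pi_e$ for a suitable self-map $g : A \to A$, where $\pi_e(z) = z(e)$; by Lemma \ref{le-div} (with $S_0 = \{e\}$) the activity value is $\vert A \vert^{\vert S \vert - 1}$ times the number of non-fixed points of $g$, and one checks that for every $1 \leq j \leq \vert A \vert$ there is a non-constant $g$ with exactly $j$ non-fixed points (take at least two fixed points when $j \leq \vert A \vert - 2$, a single fixed point together with a derangement of the remaining $\vert A \vert - 1 \geq 2$ elements when $j = \vert A \vert - 1$, and a fixed-point-free permutation when $j = \vert A \vert$), so that $e$ is essential and $\MN(\mu) = \{e\}$; the case $j = \vert A \vert$ settles part (b) for $r = 1$. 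The genuinely delicate point is $\vert A \vert = 2$ with $r = 1$, where the only admissible value is $k = \vert A \vert^{\vert S \vert - 1}$ (i.e. $j = 1$): no non-constant self-map of a two-element set has exactly one non-fixed point, so $e$ cannot be made essential here. Instead I would pick any $s \in S \setminus \{e\}$ (which exists since $\vert S \vert \geq 2$) and take $\mu$ to be the projection $\mu(z) := z(s)$; then $\MN(\mu) = \{s\}$ has size $1$, and since $e$ is inessential Lemma \ref{le-e} gives $\alpha(\mu) = \vert A \vert^{\vert S \vert} - \vert A \vert^{\vert S \vert - 1}$, which equals $\vert A \vert^{\vert S \vert - 1} = k$ as $\vert A \vert = 2$, completing the argument.
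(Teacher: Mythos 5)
Your proposal is correct, and its core is the same reduction the paper uses: pick $S_0 \subseteq S$ with $e \in S_0$ and $\vert S_0\vert = r$, obtain $\mu_0 : A^{S_0} \to A$ from Theorem \ref{th-last} with $\alpha(\mu_0) = k' = k/\vert A\vert^{\vert S\vert - r}$, extend to $\mu$ on $S$, and convert $k'$ back to $k$ via Lemma \ref{le-div}. Where you genuinely depart from the paper is the base case $r=1$, and this is a real improvement rather than a stylistic difference. The paper's proof says only ``build a local function with $r$ essential elements from $S$, including $e$'' and then invokes Theorem \ref{th-last}, which silently breaks down when $r=1$: the theorem requires a neighborhood of size at least $2$, so it cannot be applied to $S_0=\{e\}$. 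Worse, for $\vert A\vert = 2$ the paper's prescription is impossible to carry out at all: a local map with $\MN(\mu)=\{e\}$ is of the form $g \circ \pi_e$ with $g : A \to A$ non-constant, and for a two-element alphabet the only non-constant self-maps are the identity (activity $0$) and the flip (activity $2^{\vert S\vert}$), neither of which lies in the admissible range $0 < k < 2^{\vert S\vert}$. Your workaround --- realizing $k = 2^{\vert S\vert-1}$ by the projection $\mu(z) = z(s)$ onto a coordinate $s \neq e$, so that $\MN(\mu)=\{s\}$ and Lemma \ref{le-e} gives the activity count --- is exactly what is needed, and it also shows that the minimal neighborhood achieving this pair $(r,k)$ cannot contain $e$. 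Your explicit fixed-point count for $g$ when $\vert A\vert \geq 3$ (including the fixed-point-free permutation for the $k=\vert A\vert^{\vert S\vert}$ case) likewise fills in the $r=1$ instances of the ``furthermore'' statement. In short: same strategy in the main case, but your treatment is the more complete one, and the case analysis you added is not optional.
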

\begin{proof}
For all $0 < r \leq \vert S\vert$, we may choose a local function $\mu : A^S \to A$ such that $\vert\MN(\mu)\vert = r$ and $e \in S$ (just build a local function with $r$ essential elements from $S$, including $e \in S$). By Lemma \ref{le-div},
\[ \alpha(\mu) = \alpha(\mu_0) \vert A \vert^{|S| - r},  \]
where $\mu_0 : A^{S_0} \to A$ is the reduced local function such that $\mu_0 \sim \mu$ and $S_0 = \MN(\mu)$. The result follows by applying Theorem \ref{th-last} to the local function $\mu_0$. 
\end{proof}


\subsection{Binary alphabet}

When $A= \{0,1\}$, the connections between the sizes of minimal neighborhoods and activity values of local functions presents further symmetries. In this case, a local function $\mu : A^S \to A$ is completely defined by its set of active transitions $\mathcal{T}$, because, for all $z \in A^S$, we have
\[ \mu(z) = \begin{cases}
z(e)^c & \text{ if} z \in \mathcal{T} \\
z(e) & \text{ if } z \not\in \mathcal{T} 
\end{cases}\]
where $c$ denotes the complement of the elements in $A=\{0,1\}$.

\begin{lemma}\label{le-4}
Let $A=\{0,1\}$ and let $\mu : A^S \to A$ be a local map with active transitions $\mathcal{T}$ and passive transitions $\mathcal{P}$.
\begin{enumerate}
 \item $e \in S$ is essential for $\mu$ if and only if there exist $z, w \in \mathcal{T}$, or $z,w \in \mathcal{P}$, such that $z \neq w$ and $[z]_e = [w]_e$.  
\item $s \in S \setminus \{e\}$ is essential for $\mu$ if and only if there exist $z \in \mathcal{T}$ and $w \in \mathcal{P}$ such that $[z]_s = [w]_s$.
 \end{enumerate}
\end{lemma}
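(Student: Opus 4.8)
The plan is to exploit the explicit piecewise description of $\mu$ available for the binary alphabet, namely that $\mu(z) = z(e)^c$ when $z \in \mathcal{T}$ and $\mu(z) = z(e)$ when $z \in \mathcal{P}$, together with the fact that, since $\vert A \vert = 2$, every block $[z]_s$ consists of exactly two elements. The first step I would record is a reformulation of the defining condition: for $s \in S$, the element $s$ is essential for $\mu$ precisely when there is a block $[z]_s = \{z,w\}$ with $z \neq w$ and $\mu(z) \neq \mu(w)$. This is just Definition \ref{essential} combined with $\vert [z]_s \vert = \vert A \vert = 2$, so that a witnessing pair exhausts the whole block. Thus in each part it suffices to determine, for an arbitrary two-element block $\{z,w\}$ differing only in coordinate $s$, exactly when $\mu(z) \neq \mu(w)$.

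The key is then a short case analysis keyed to whether $s = e$ or $s \neq e$, since the relationship between $z(e)$ and $w(e)$ differs in these two situations. For part (1), where $s = e$, the two elements $z,w$ differ only in the coordinate $e$, so $z(e) \neq w(e)$; applying the formula shows that $\mu(z) \neq \mu(w)$ holds exactly when $z,w$ are both active (both $e$-values get complemented) or both passive (both $e$-values are retained), while a mixed pair yields $\mu(z) = \mu(w)$. This gives both directions of part (1): the backward (``if'') implication for a passive pair is Lemma \ref{le-3}(1), the backward implication for an active pair follows from this computation, and the forward (``only if'') implication follows because an essential $e$ forces a block with $\mu(z) \neq \mu(w)$, which by the case analysis cannot be a mixed pair and hence must be an active or a passive pair.

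For part (2), where $s \neq e$, the elements $z,w$ agree in the coordinate $e$, so $z(e) = w(e)$; now the formula shows that $\mu(z) \neq \mu(w)$ holds exactly when one of $z,w$ is active and the other passive, while any same-type pair gives equal values. This yields part (2): the backward implication is Lemma \ref{le-3}(2), and the forward implication follows since an essential $s$ produces a block with $\mu(z) \neq \mu(w)$, which by the computation must be a mixed (active/passive) pair. In both parts, the injectivity of the complement map $a \mapsto a^c$ on $\{0,1\}$ is what keeps the equalities and inequalities transparent across the subcases.

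I do not anticipate any serious obstacle, since the statement is essentially a direct reading of the binary formula for $\mu$; the only point requiring genuine care is the bookkeeping distinction between the cases $s = e$ and $s \neq e$. The entire content of the lemma turns on whether the two elements of a block share the same value at $e$ (which, when $s \neq e$, forces a \emph{mixed} pair to witness essentiality) or differ there (which, when $s = e$, forces a \emph{same-type} pair), so I would state this dichotomy cleanly at the outset to avoid conflating the two regimes.
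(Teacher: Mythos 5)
Your proposal is correct and takes essentially the same approach as the paper: both arguments rest on the binary formula ($\mu(z)=z(e)^c$ on $\mathcal{T}$, $\mu(z)=z(e)$ on $\mathcal{P}$), the fact that each block $[z]_s$ has exactly two elements, and a case analysis on whether the witnessing pair is same-type or mixed, keyed to whether $z(e) \neq w(e)$ (case $s=e$) or $z(e)=w(e)$ (case $s \neq e$). The paper merely packages the forward direction of (1) as a proof by contradiction and the forward direction of (2) as a contrapositive, but the underlying computations are identical to yours, including the appeals to Lemma \ref{le-3} for the ``if'' directions.
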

\begin{proof}
For part (1), suppose that $e$ is essential for $\mu$. By definition, there are $z,w \in A^S$ such that $[z]_e = [w]_e$ and $\mu(z) \neq \mu(w)$. Note that $z(e) \neq w(e)$, as otherwise $z=w$. Since $A=\{0,1\}$, we must have $z(e)^c = w(e)$. If $z \in \mathcal{T}$ and $w \in \mathcal{P}$, then 
\[ \mu(z) = z(e)^c = w(e) = \mu(w),  \]
which is a contradiction. Therefore, either $z, w \in \mathcal{T}$, or $z,w \in \mathcal{P}$. Suppose now that there exist $z, w \in \mathcal{T}$, or $z,w \in \mathcal{P}$, such that $z \neq w$ and $[z]_e = [w]_e$. If $z,w \in \mathcal{P}$, it follows by Lemma \ref{le-3}(2) that $e$ is essential for $\mu$, so assume that $z, w \in \mathcal{T}$. As before, $z(e) \neq w(e)$, so $z(e)^c \neq w(e)^c$. Hence,
\[ \mu(z) = z(e)^c \neq w(e)^c = \mu(w). \]
This shows that $e$ is essential for $\mu$.  

The converse implication of part (2) follows by Lemma \ref{le-3}(2). For the direct implication of part (2), suppose that $[z]_s \subseteq \mathcal{T}$ for all $z \in \mathcal{T}$. Take arbitrary $z_1, z_2 \in A^S$ such that $[z_1]_s = [z_2]_2$. Since $s \neq e$, then $z_1(e) = z_2(e)$. If $z_1 \in \mathcal{T}$, then $z_2 \in \mathcal{T}$ by assumption, so
\[ \mu(z_1) = z_1(e)^c = z_2(e)^c = \mu(z_2), \]
If $z_1 \in \mathcal{P}$, then $z_2 \in \mathcal{P}$ by assumption, so
\[ \mu(z_1) = z_1(e) = z_2(e) = \mu(z_2).  \] 
This shows that $s$ is not essential for $\mu$. 
\end{proof}

The following proposition explains some further symmetries that appear in the case of a binary alphabet.  

\begin{proposition}\label{le-complement}
Let $A=\{0,1\}$ and let $\mu : A^S \to A$ be a local map with active transitions $\mathcal{T}$. Denote by $\mu^\prime : A^S \to A$ the local map with active transitions $A^S \setminus \mathcal{T}$. Then, 
\[ \MN(\mu) = \MN(\mu^\prime). \] 
\end{proposition}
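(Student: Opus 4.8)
The plan is to reduce the statement to a comparison of essential elements and then exploit the symmetry already visible in Lemma \ref{le-4}. By Proposition \ref{le-mms}, $\MN(\mu) = \{ s \in S : s \text{ is essential for } \mu \}$ and likewise for $\mu^\prime$, so it suffices to prove that an element $s \in S$ is essential for $\mu$ if and only if it is essential for $\mu^\prime$. The crucial observation is that passing from $\mu$ to $\mu^\prime$ simply interchanges the active and passive transitions: if $\mathcal{T}^\prime$ and $\mathcal{P}^\prime$ denote the active and passive transitions of $\mu^\prime$, then by definition $\mathcal{T}^\prime = A^S \setminus \mathcal{T} = \mathcal{P}$, and consequently $\mathcal{P}^\prime = A^S \setminus \mathcal{T}^\prime = \mathcal{T}$.

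With this in hand, I would apply Lemma \ref{le-4} to both $\mu$ and $\mu^\prime$ and check that each characterization is invariant under the swap $\mathcal{T} \leftrightarrow \mathcal{P}$. For $e$, Lemma \ref{le-4}(1) says that $e$ is essential for $\mu$ precisely when there exist distinct $z, w$ with $[z]_e = [w]_e$ lying both in $\mathcal{T}$ or both in $\mathcal{P}$; the disjunction ``$z,w \in \mathcal{T}$ or $z,w \in \mathcal{P}$'' is literally unchanged when $\mathcal{T}$ and $\mathcal{P}$ are exchanged, so the same condition characterizes essentiality of $e$ for $\mu^\prime$ (whose active and passive sets are $\mathcal{P}$ and $\mathcal{T}$). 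For $s \in S \setminus \{e\}$, Lemma \ref{le-4}(2) says $s$ is essential for $\mu$ exactly when there exist $z \in \mathcal{T}$ and $w \in \mathcal{P}$ with $[z]_s = [w]_s$; applying the lemma to $\mu^\prime$ asks for $z \in \mathcal{P}$ and $w \in \mathcal{T}$ with $[z]_s = [w]_s$, which is the same requirement after renaming $z \leftrightarrow w$. In both cases the essentiality conditions for $\mu$ and $\mu^\prime$ coincide, so the two local maps have the same essential elements and hence $\MN(\mu) = \MN(\mu^\prime)$.

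There is essentially no hard step here; the work is entirely in correctly identifying the active and passive transitions of $\mu^\prime$ and in noting the built-in symmetry of the conditions in Lemma \ref{le-4}. The only point that requires a moment of care is that the two parts of Lemma \ref{le-4} treat $e$ and $s \neq e$ asymmetrically, so one must confirm the swap-invariance separately in each case rather than appealing to a single uniform statement. As an alternative route that avoids Lemma \ref{le-4} altogether, one can verify directly from the case description of $\mu$ in the binary alphabet that $\mu^\prime(z) = \mu(z)^c$ for every $z \in A^S$, i.e. $\mu^\prime = c \circ \mu$ with $c$ the complementation of $A = \{0,1\}$; since postcomposition with a bijection of the alphabet leaves both the relation $\mu(z) = \mu(w)$ and the partitions $[z]_s$ unchanged, it preserves the set of essential elements, giving $\MN(\mu) = \MN(\mu^\prime)$ immediately. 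I would likely present the Lemma \ref{le-4} argument as the main proof, since it keeps the exposition self-contained within the binary-alphabet subsection.
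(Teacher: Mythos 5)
Your main argument is correct and is essentially the paper's own proof: the paper likewise deduces the statement from both parts of Lemma \ref{le-4}, using that the active transitions of $\mu^\prime$ are exactly the passive transitions of $\mu$ and that the two essentiality criteria are invariant under the swap $\mathcal{T} \leftrightarrow \mathcal{P}$ (with the renaming $z \leftrightarrow w$ in part (2), just as you note).

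Your alternative route, however, is a genuine departure from the paper and is also correct. The identity $\mu^\prime = c \circ \mu$, with $c$ the complementation of $A=\{0,1\}$, follows directly from the case description of binary local maps: on $\mathcal{T}$ one has $\mu(z) = z(e)^c$ and $\mu^\prime(z) = z(e)$, and on $\mathcal{P}$ the reverse. Since $c$ is injective, $\mu^\prime(z) \neq \mu^\prime(w)$ holds if and only if $\mu(z) \neq \mu(w)$, so by Definition \ref{essential} and Proposition \ref{le-mms} the two maps have the same essential elements and hence the same minimal neighborhood. This is shorter than the paper's argument, avoids Lemma \ref{le-4} entirely, and isolates the real mechanism: postcomposition by a bijection of the alphabet (as opposed to the conjugation $\mu^c$ of Section \ref{sec-sym}, which the paper is careful to distinguish from $\mu^\prime$) preserves essential elements for any alphabet; the only step that uses $\vert A \vert = 2$ is the identification of $\mu^\prime$ with $c \circ \mu$, which is indeed binary-specific since only then does the transition set determine the local map. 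What the paper's route buys instead is that it stays entirely within the transition-set language of the binary subsection and puts the freshly proved Lemma \ref{le-4} to work.
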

\begin{proof}
By Lemma \ref{le-4} (1), $e$ is essential for $\mu$ if and only if there are $z, w \in \mathcal{T}$, or $z,w \in A^S \setminus \mathcal{T}$, such that $z \neq w$ and $[z]_e = [w]_e$; but again by Lemma \ref{le-4} (1), this holds if and only if $e$ is essential for $\mu^\prime$

By Lemma \ref{le-4} (2), $s \in S \setminus \{e\}$ is essential for $\mu$ if and only if there are $z \in \mathcal{T}$ and $w \in A^S \setminus \mathcal{T}$ such that $[z]_s = [w]_s$; but again by Lemma \ref{le-4} (2), this holds if and only if $s$ is essential for $\mu^\prime$.
\end{proof}

\begin{remark}
The local map $\mu^\prime : A^S \to A$ defined in Proposition \ref{le-complement} is different from the complement $\mu^c : A^S \to A$ defined in Section \ref{sec-sym}. Recall that by Proposition \ref{sym-ac}, $\alpha(\mu)=\alpha(\mu^c)$, but clearly 
\[ \alpha(\mu^\prime) = 2^{|S|} - \alpha(\mu). \]
\end{remark}

\section{Future work}

In this article, for a finite subset $S \subseteq G$ with $e \in S$, we considered the activity value of a local function $\mu : A^S \to A$ as the number of patterns $z \in A^S$ such that $\mu(z) \neq z(e)$. Since a cellular automaton $\tau : A^G \to A^G$ may be defined by many local functions, it is natural to define the \emph{activity value of $\tau$} by $\alpha(\tau) := \alpha(\mu_0)$, where $\mu_0 : A^{S_0} \to A$ is the local function associated with the minimal neighborhood $S_0 := \MN(\tau)$. However, note that this definition is only possible when $e \in \MN(\tau)$. The following is a list of problems that may be interesting for future work. 

\begin{enumerate}
\item Let $\tau, \sigma : A^G \to A^G$ be two cellular automata such that $e \in \MN(\tau)$ and $e \in \MN(\sigma)$. When does it hold that $e \in \MN(\tau \circ \sigma)$? In case that $e \in \MN(\tau \circ \sigma)$, is there any relation between $\alpha(\tau \circ \sigma)$, $\alpha(\tau)$ and $\alpha(\sigma)$? 

\item Let $\tau : A^G \to A^G$ be an invertible cellular automaton such that $e \in \MN(\tau)$. When does it hold that $e \in \MN(\tau^{-1})$? In case that $e \in \MN(\tau^{-1})$, is there any relation between $\alpha(\tau^{-1})$ and $\alpha(\tau)$? 

\item Count the exact number of local functions with a given activity value and size of minimal neighborhood. 

\item Generalize the definition of activity value as follows: \emph{activity value at $s \in S$} of a local function $\mu : A^S \to A$ is the number of patterns $z \in A^S$ such that $\mu(z) \neq z(s)$. 
\end{enumerate}

\section*{Acknowledgments}

The second author was supported by CONAHCYT \emph{Becas nacionales para estudios de posgrado}, Government of Mexico. We sincerely thank Maximilien Gadouleau for organizing the workshop AUTOMATA 2024 at Durham University and for his invitation to submit this extended version. We also thank Barbora Hudcov\'a and Nazim Fat\`es for very interesting discussions during this workshop.



\begin{thebibliography}{}

\bibitem{Pedro1} Balbi, P. P., Mattos, T., Ruivo, E.: Characterisation of the elementary cellular automata with
neighbourhood priority based deterministic updates. Comun. Nonlinear Sci. Numer. Simulat. \textbf{104} (2022) 106018. \url{https://doi.org/10.1016/j.cnsns.2021.106018}

\bibitem{Pedro2} Balbi, P. P., Mattos, T., Ruivo, E.: From Multiple to Single Updates Per Cell in Elementary Cellular Automata with Neighbourhood Based Priority. In: Das, S., Roy, S., Bhattacharjee, K. (eds) The Mathematical Artist. Emergence, Complexity and Computation \textbf{45}, Springer, Cham., 2022. \url{https://doi.org/10.1007/978-3-031-03986-7_6}

\bibitem{CRG20} Castillo-Ramirez, A., Gadouleau, M.: Elementary, Finite and Linear vN-Regular Cellular Automata, Inf. Comput. \textbf{274}, (2020) 104533.

\bibitem{CV24} Castillo-Ramirez, A., Veliz-Quintero, E.: On the Minimal Memory Set of Cellular Automata. In: Gadouleau, M., Castillo-Ramirez, A. (eds) Cellular Automata and Discrete Complex Systems. AUTOMATA 2024. Lecture Notes in Computer Science, vol 14782. Springer, Cham, 2024. \url{https://doi.org/10.1007/978-3-031-65887-7_7}

\bibitem{GCA} Castillo-Ramirez, A.,  Sanchez-Alvarez, M., Vazquez-Aceves, A., Zaldivar-Corichi, A.: A generalization of cellular automata over groups, Communications in Algebra \textbf{51.7} (2023) 3114-3123. \url{https://doi.org/10.1080/00927872.2023.2177663}

\bibitem{CAandG} Ceccherini-Silberstein, T., Coornaert, M.: Cellular Automata and Groups. Springer Monographs in Mathematics, Springer Nature Switzerland, 2nd Ed., 2023.

\bibitem{ExCA} Ceccherini-Silberstein, T., Coornaert, M.: Exercises in Cellular Automata and Groups. Springer Monographs in Mathematics, Springer Cham, Switzerland, 2023.

\bibitem{Concha} Concha-Vega, P., Goles, E., Montealegre, P., Ríos-Wilson, M., \& Santivañez, J.: Introducing the activity parameter for elementary cellular automata. Int. J. Mod. Phys. C \textbf{33(9)} (2022) 2250121. \url{https://doi.org/10.1142/S0129183122501212}

\bibitem{Fates1} Fates, N.: A tutorial on elementary cellular automata with fully asynchronous updating. Nat. Comput. \textbf{19} (1) (2020) 179--197. \url{https://doi.org/10.1007/s11047-020-09782-7}

\bibitem{Fates2} Fates, N.: Asynchronous Cellular Automata. In: Meyers, R. (eds) Encyclopedia of Complexity and Systems Science. Springer, Berlin, Heidelberg, 2018. \url{https://doi.org/10.1007/978-3-642-27737-5_671-2}

\bibitem{Kari} Kari, J.: Theory of cellular automata: a survey. Theoretical Computer Science \textbf{334} (2005) 3 -- 33. \url{https://doi.org/10.1016/j.tcs.2004.11.021}

\bibitem{Wolfram} Wolfram, S.: Tables of Cellular Automaton Properties. In: Theory and Applications of Cellular Automata, World Scientific Publishing Co. Ltd, Singapore, p. 485--557, 1986. 

\end{thebibliography}
\end{document}